\setlist[enumerate]{itemsep=1ex}
\theoremstyle{definition}
\newtheorem{definition}{Definition}
\numberwithin{definition}{section}
\newtheorem{example}[definition]{Example}
\newtheorem{remark}[definition]{Remark}
\theoremstyle{plain}
\newtheorem{corollary}[definition]{Corollary}
\newtheorem{lemma}[definition]{Lemma}
\newtheorem{proposition}[definition]{Proposition}
\newtheorem{theorem}[definition]{Theorem}
\newtheorem*{maintheorem}{Main Theorem}
\renewcommand{\Diamond}{\rotatebox[origin=c]{45}{$\Box$}}
\DeclareMathOperator{\adj}{adj}
\DeclareMathOperator{\Adm}{Adm}
\DeclareMathOperator{\Com}{Com}
\let\d\relax\DeclareMathOperator{\d}{d\!}
\newcommand{\ie}{\emph{i.e.}\ }
\DeclareMathOperator{\init}{in}
\newcommand{\p}{\mathfrak p}
\newcommand{\R}{\mathbb R}
\DeclareMathOperator{\Tub}{Tub}
\DeclareMathOperator{\vol}{vol}
\title{Representations of the Flat Space Wavefunction}
\author[T. Dunaisky]{Tyler Dunaisky\orcidlink{0009-0000-2123-1789}}
\address{Purdue University Department of Mathematics, West Lafayette, IN 47907, USA.}
\email{tdunaisk@purdue.edu}
\subjclass{Primary: 83C47. Secondary: 05E40, 83F05.}
\keywords{canonical form, cosmological correlator, cosmological polytope, flat space wavefunction, graph tubing.}
\begin{document}

\begin{abstract}
From any graph $G$ arises a flat space wavefunction, obtained by integrating a product of propagators associated to the vertices and edges of $G$. This function is a key ingredient in the computation of cosmological correlators, and several representations for it have been proposed. We formulate three such representations and prove their correctness. In particular, we show that the flat space wavefunction can be read off from the canonical form of the cosmological polytope, and we settle a conjecture of Fevola, Pimentel, Sattelberger, and Westerdijk regarding a partial fraction decomposition for the flat space wavefunction. The terms of the decomposition correspond to certain collections of connected subgraphs associated to $G$ and its spanning subgraphs, reflecting the fact that the flat space wavefunction contains information about how $G$ is connected.
\end{abstract}

\maketitle
\tableofcontents

\section{Introduction}

Arkani-Hamed, Benincasa, and Postnikov \cite{AH-B-P} introduced the \emph{wavefunction of the universe}, an object which provides information about the state of the early universe. The wavefunction is built up by contributions from \emph{cosmological correlators} $\Psi_G$, each associated to a graph $G$. Always, $G$ is a finite graph (possibly with multiple edges or edges from a vertex to itself): it can be thought of as a diagram in spacetime describing energy exchanges between particles. 

Computation of the cosmological correlator requires evaluation of the integral
$$\Psi_G(\mathbf{X},\mathbf{Y})=\int_{\R_{\geq0}^n}\psi_G(\mathbf{X}+\boldsymbol{\alpha},\mathbf{Y})\boldsymbol{\alpha}^\varepsilon\d\boldsymbol{\alpha},$$
where $\varepsilon$ is a constant encoding the underlying cosmological model and $\psi_G$ is the \emph{flat space wavefunction} (so called because it agrees with $\Psi_G$ in the ``flat space" limit ${\varepsilon\to-1}$). These integrals prove to be extremely challenging: see \cite{AH-B-H-J-L-P,F-MH,F-P-S-W,G-H,P-S} for some approaches to this problem. Moreover, computation of the function appearing in the integrand is a difficult problem in its own right. In \cite{AH-B-P}, Arkani-Hamed, Benincasa, and Postnikov worked out several examples and observed that $\psi_G$ has unexpectedly nice expressions in terms of linear forms $\ell_T$ associated to the connected subgraphs (``tubes") of $G$. Motivated by this and the underlying physics, they proposed several representations for $\psi_G$, including a ``bulk" representation and a ``boundary" representation. Most notably, they conjectured \cite[Equation 3.5]{AH-B-P} that the \emph{canonical form} of the \emph{cosmological polytope} is of the form
$$\Omega_G=\psi_G\d\mathbf{X}\d\mathbf{Y}.$$
Canonical forms, which are more generally associated to any \emph{positive geometry}, appear in many other contexts and have been studied extensively in recent years \cite{AH-B-L,B-D,Lam,P-T}.

Throughout the cosmological correlator literature, the representations mentioned above have been used implicitly. In this paper, we give proofs that $\psi_G$ indeed has the above canonical form representation, as well as two other representations in terms of certain collections of connected subgraphs (``tubings"), which we interpret as realizations of the bulk and boundary representations. In particular, our bulk representation is an amended version of \cite[Conjecture 3.5]{F-P-S-W} and equivalent to the formula proposed in \cite[Equation 4.1]{Glew}. Along the way, we describe a natural triangulation of the \emph{dual cosmological polytope} (for triangulations of the cosmological polytope, see \cite{J-S-V}) using the language of tubings.

\begin{maintheorem}[Theorems \ref{theorem_bulk}, \ref{theorem_boundary}, \ref{theorem_canonicalform}]
    For any graph $G$, we have the following representations of the flat space wavefunction $\psi_G$:

    \begin{enumerate}[label=(\alph*)]

        \item
        the bulk representation:
        $$\psi_G=\frac{1}{\prod\limits_{e\in E}(2Y_e)}\sum_H\sum_{\mathbf{T}}\frac{(-1)^{|E\setminus E_H|}}{\prod\limits_{T\in\mathbf{T}}\ell_T},$$
        where the sums are over all spanning subgraphs $H=(V,E_H)$ of $G=(V,E)$ and all admissible tubings $\mathbf{T}$ of $H$.

        \item
        the boundary representation:
        $$\psi_G=\sum_{\mathbf{T}}\frac{1}{\prod\limits_{T\in\mathbf{T}}\ell_T},$$
        where the sum is over all complete tubings $\mathbf{T}$ of $G$.

        \item
        the canonical form representation:
        $$\psi_G=\frac{\adj_G}{\displaystyle\prod\limits_{T}\ell_T},$$
        where the product is over all tubes of $G$ and $\adj_G$ is the adjoint polynomial of the dual cosmological polytope.

    \end{enumerate}
\end{maintheorem}

It follows from (c) and Proposition \ref{prop_canonicalform} that the canonical form of the cosmological polytope is indeed $\Omega_G=\psi_G\d\mathbf{X}\d\mathbf{Y}$. From this perspective, we can also interpret (a) and (b) as providing representations of $\Omega_G$.

The paper is laid out as follows. In Section \ref{sec_tubings}, we define admissible and complete tubings and prove some useful lemmas regarding them. In Sections \ref{sec_bulk}, \ref{sec_boundary}, and \ref{sec_canonicalform}, we prove the bulk, boundary, and canonical form representations, respectively. Always, we work with a finite graph $G$ with vertices $V=\{v_1,\dots,v_n\}$ and edges $E$.

\section{Tubings}\label{sec_tubings}

In this section, we define admissible and complete tubings and prove some preliminary results. The main result is Lemma \ref{lemma_bijection}, which allows us to label the tubes in an admissible tubing by the vertices of $G$. The notion of admissible tubings was introduced in \cite{C-D} (though they would call them $n$-tubings), and the notion of complete tubings was introduced in \cite{AH-B-H-J-L-P}.

\begin{definition}\label{def_tubes}
    Fix a graph $G=(V,E)$. A \emph{tube} of $G$ is a connected subgraph $T\subseteq G$. We denote the set of all tubes of $G$ by $\Tub(G)$. We say $T$ is \emph{induced} if it is an induced subgraph of $G$, \ie if $T$ contains all the edges between its vertices. We say tubes $S,T\in\Tub(G)$ are:
    \begin{enumerate}[label=-]

        \item\emph{nested} if $S\subseteq T$ or $T\subseteq S$,

        \item\emph{overlapping} if $S\cap T$ is nonempty, and $S$, $T$ are not nested,

        \item\emph{incompatible} if there is an edge between $S$ and $T$, and $S$, $T$ are not nested.

    \end{enumerate}
    Note that overlapping tubes are automatically incompatible, since we require tubes to be connected.
\end{definition}

\begin{figure}
    \renewcommand{\arraystretch}{1.5}
    \begin{tabular}{ccccccc}
        \begin{tikzpicture}
            \draw (0,0) circle (0.375);
            \draw[fill=black] (-0.375,0) circle (.5mm);
            \draw[fill=black] (0.375,0) circle (.5mm);
            \draw[color=gray] (0:0.175) arc (0:180:0.175);
            \draw[color=gray] (0:0.575) arc (0:180:0.575);
            \draw[color=gray] (180:0.575) arc (-180:0:0.2);
            \draw[color=gray] (0:0.575) arc (0:-180:0.2);
        \end{tikzpicture}
        & &
        \begin{tikzpicture}
            \draw (0,0) circle (0.375);
            \draw[fill=black] (-0.375,0) circle (.5mm);
            \draw[fill=black] (0.375,0) circle (.5mm);
            \draw[color=gray] (-0.375,0) circle (.15cm and .15cm);
            \draw[color=gray] (0:0.175) arc (0:180:0.175);
            \draw[color=gray] (0:0.575) arc (0:180:0.575);
            \draw[color=gray] (180:0.575) arc (-180:0:0.2);
            \draw[color=gray] (0:0.575) arc (0:-180:0.2);
        \end{tikzpicture}
        & &
        \begin{tikzpicture}
            \draw (0,0) circle (0.375);
            \draw[fill=black] (-0.375,0) circle (.5mm);
            \draw[fill=black] (0.375,0) circle (.5mm);
            \draw[color=gray] (0:0.175) arc (0:180:0.175);
            \draw[color=gray] (0:0.575) arc (0:180:0.575);
            \draw[color=gray] (180:0.575) arc (-180:0:0.2);
            \draw[color=gray] (0:0.575) arc (0:-180:0.2);
            \draw[color=gray] (0:0.25) arc (0:-180:0.25);
            \draw[color=gray] (0:0.5) arc (0:-180:0.5);
            \draw[color=gray] (180:0.5) arc (180:0:0.125);
            \draw[color=gray] (0:0.5) arc (0:180:0.125);
        \end{tikzpicture}
        & &
        \begin{tikzpicture}
            \draw (0,0) circle (0.375);
            \draw[fill=black] (-0.375,0) circle (.5mm);
            \draw[fill=black] (0.375,0) circle (.5mm);
            \draw[color=gray] (-0.375,0) circle (.15cm and .15cm);
            \draw[color=gray] (0.375,0) circle (.15cm and .15cm);
        \end{tikzpicture}
        \\
        (a) & & (b) & & (c) & & (d)
    \end{tabular}
\caption{Examples of a) a non-induced tube, b) nested tubes, c) overlapping tubes, d) incompatible (but not overlapping) tubes.}
\label{fig_tubeexamples}
\end{figure}
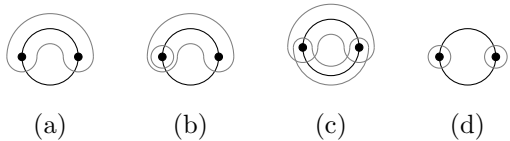

See Figure \ref{fig_tubeexamples} for some examples. We will say that tubes which are not incompatible are \textit{compatible}.

\begin{definition}
    Fix a graph $G$. A collection of tubes $\mathbf T\subseteq\Tub(G)$ is a:
    \begin{enumerate}[label=-]

        \item \emph{complete tubing} if it is a maximal collection of non-overlapping tubes.

        \item \emph{admissible tubing} if it is a maximal collection of compatible induced tubes.

    \end{enumerate}
    We denote the collections of all such tubings by
    \begin{align*}
        \Com(G)&=\{\text{complete tubings of $G$}\},\\
        \Adm(G)&=\{\text{admissible tubings of $G$}\},
    \end{align*}
    for convenience.
\end{definition}

Figure \ref{fig_tubingexamples} lists the complete and admissible tubings for some small graphs. It is easy to see that complete tubings always contain every singleton tube $\{v\}$. Disregarding these, observe there is a correspondence between the complete tubings of the $1$-loop bubble and the admissible tubings of the $2$-site chain, and likewise between the complete tubings of the $3$-site chain and the admissible tubings of the $2$-site chain. Such a correspondence holds in general, as we explain below.

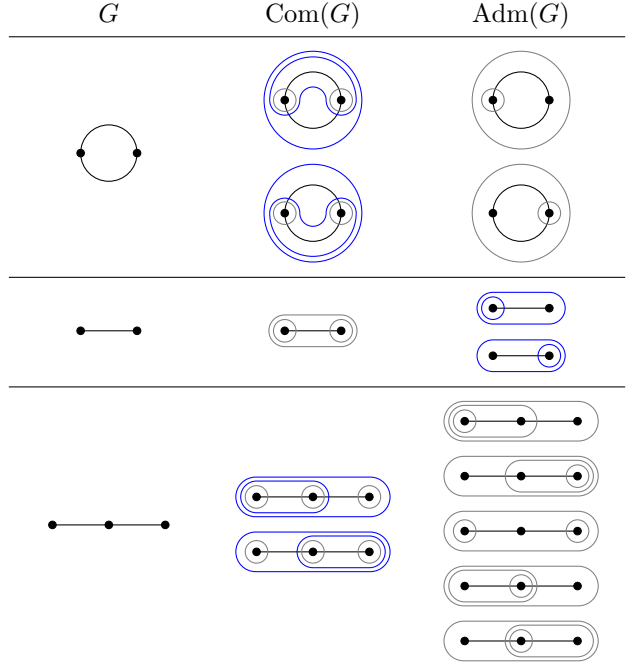
\begin{figure}
    \renewcommand{\arraystretch}{1.5}
    \begin{tabular}{ccc}
        \hspace{0.4in}$G$\hspace{0.4in} & \hspace{0.2in}$\Com(G)$\hspace{0.2in} & \hspace{0.2in}$\Adm(G)$\hspace{0.2in}
        \\ \hline
        \begin{tikzpicture}[baseline=-4]
            \draw (0,0) circle (0.375);
            \draw[fill=black] (-0.375,0) circle (.5mm);
            \draw[fill=black] (0.375,0) circle (.5mm);
        \end{tikzpicture}
        &
        \begin{tabular}{c}\\[-3ex]
            \begin{tikzpicture}
                \draw (0,0) circle (0.375);
                \draw[fill=black] (-0.375,0) circle (.5mm);
                \draw[fill=black] (0.375,0) circle (.5mm);
                \draw[color=gray] (-0.375,0) circle (.15cm and .15cm);
                \draw[color=gray] (0.375,0) circle (.15cm and .15cm);
                \draw[color=blue] (0,0) circle (0.65);
                \draw[color=blue] (0:0.175) arc (0:180:0.175);
                \draw[color=blue] (0:0.575) arc (0:180:0.575);
                \draw[color=blue] (180:0.575) arc (-180:0:0.2);
                \draw[color=blue] (0:0.575) arc (0:-180:0.2);
            \end{tikzpicture}
            \\
            \begin{tikzpicture}
                \draw (0,0) circle (0.375);
                \draw[fill=black] (-0.375,0) circle (.5mm);
                \draw[fill=black] (0.375,0) circle (.5mm);
                \draw[color=gray] (-0.375,0) circle (.15cm and .15cm);
                \draw[color=gray] (0.375,0) circle (.15cm and .15cm);
                \draw[color=blue] (0,0) circle (0.65);
                \draw[color=blue] (0:0.175) arc (0:-180:0.175);
                \draw[color=blue] (0:0.575) arc (0:-180:0.575);
                \draw[color=blue] (180:0.575) arc (180:0:0.2);
                \draw[color=blue] (0:0.575) arc (0:180:0.2);
            \end{tikzpicture}
        \end{tabular}
        &
        \begin{tabular}{c}\\[-3ex]
            \begin{tikzpicture}
                \draw (0,0) circle (0.375);
                \draw[fill=black] (-0.375,0) circle (.5mm);
                \draw[fill=black] (0.375,0) circle (.5mm);
                \draw[color=gray] (-0.375,0) circle (.15cm and .15cm);
                \draw[color=gray] (0,0) circle (0.65);
            \end{tikzpicture}
            \\
            \begin{tikzpicture}
                \draw (0,0) circle (0.375);
                \draw[fill=black] (-0.375,0) circle (.5mm);
                \draw[fill=black] (0.375,0) circle (.5mm);
                \draw[color=gray] (0.375,0) circle (.15cm and .15cm);
                \draw[color=gray] (0,0) circle (0.65);
            \end{tikzpicture}
        \end{tabular}
        \\ \hline
        \begin{tikzpicture}[baseline=-3]
            \draw[fill=black] (-0.375,0) -- (0.375,0);
            \draw[fill=black] (-0.375,0) circle (.5mm);
            \draw[fill=black] (0.375,0) circle (.5mm);
        \end{tikzpicture}
        &
        \begin{tikzpicture}[baseline=-3]
            \draw[fill=black] (-0.375,0) -- (0.375,0);
            \draw[fill=black] (-0.375,0) circle (.5mm);
            \draw[fill=black] (0.375,0) circle (.5mm);
            \draw[color=gray] (-0.375,0) circle (.15cm and .15cm);
            \draw[color=gray] (0.375,0) circle (.15cm and .15cm);
            \node [draw, color=gray, rounded rectangle, minimum height = 1.2em, minimum width = 4em, rounded rectangle arc length = 180] at (0,0) {};
        \end{tikzpicture}
        &
        \begin{tabular}{c}\\[-3ex]
            \begin{tikzpicture}
                \draw[fill=black] (-0.375,0) -- (0.375,0);
                \draw[fill=black] (-0.375,0) circle (.5mm);
                \draw[fill=black] (0.375,0) circle (.5mm);
                \draw[color=blue] (-0.375,0) circle (.15cm and .15cm);
                \node [draw, color=blue, rounded rectangle, minimum height = 1.2em, minimum width = 4em, rounded rectangle arc length = 180] at (0,0) {};
            \end{tikzpicture}
            \\
            \begin{tikzpicture}
                \draw[fill=black] (-0.375,0) -- (0.375,0);
                \draw[fill=black] (-0.375,0) circle (.5mm);
                \draw[fill=black] (0.375,0) circle (.5mm);
                \draw[color=blue] (0.375,0) circle (.15cm and .15cm);
                \node [draw, color=blue, rounded rectangle, minimum height = 1.2em, minimum width = 4em, rounded rectangle arc length = 180] at (0,0) {};
            \end{tikzpicture}
        \end{tabular}
        \\ \hline
        \begin{tikzpicture}[baseline=-5]
            \draw[fill=black] (-0.75,0) -- (0.75,0);
            \draw[fill=black] (0,0) circle (.5mm);
            \draw[fill=black] (-0.75,0) circle (.5mm);
            \draw[fill=black] (0.75,0) circle (.5mm);
        \end{tikzpicture}
        &
        \begin{tabular}{c}
            \begin{tikzpicture}
                \draw[fill=black] (-0.75,0) -- (0.75,0);
                \draw[fill=black] (0,0) circle (.5mm);
                \draw[fill=black] (-0.75,0) circle (.5mm);
                \draw[fill=black] (0.75,0) circle (.5mm);
                \draw[color=gray] (-0.75,0) circle (.15cm and .15cm);
                \draw[color=gray] (0,0) circle (.15cm and .15cm);
                \draw[color=gray] (0.75,0) circle (.15cm and .15cm);
                \node [draw, color=blue, rounded rectangle, minimum height = 1.2em, minimum width = 4em, rounded rectangle arc length = 180] at (-0.375,0) {};
                \node [draw, color=blue, rounded rectangle, minimum height = 1.5em, minimum width = 6.5em, rounded rectangle arc length = 180] at (0,0) {};
            \end{tikzpicture}
            \\
            \begin{tikzpicture}
                \draw[fill=black] (-0.75,0) -- (0.75,0);
                \draw[fill=black] (0,0) circle (.5mm);
                \draw[fill=black] (-0.75,0) circle (.5mm);
                \draw[fill=black] (0.75,0) circle (.5mm);
                \draw[color=gray] (-0.75,0) circle (.15cm and .15cm);
                \draw[color=gray] (0,0) circle (.15cm and .15cm);
                \draw[color=gray] (0.75,0) circle (.15cm and .15cm);
                \node [draw, color=blue, rounded rectangle, minimum height = 1.2em, minimum width = 4em, rounded rectangle arc length = 180] at (0.375,0) {};
                \node [draw, color=blue, rounded rectangle, minimum height = 1.5em, minimum width = 6.5em, rounded rectangle arc length = 180] at (0,0) {};
            \end{tikzpicture}
        \end{tabular}
        &
        \begin{tabular}{c}\\[-3ex]
            \begin{tikzpicture}
                \draw[fill=black] (-0.75,0) -- (0.75,0);
                \draw[fill=black] (0,0) circle (.5mm);
                \draw[fill=black] (-0.75,0) circle (.5mm);
                \draw[fill=black] (0.75,0) circle (.5mm);
                \draw[color=gray] (-0.75,0) circle (.15cm and .15cm);
                \node [draw, color=gray, rounded rectangle, minimum height = 1.2em, minimum width = 4em, rounded rectangle arc length = 180] at (-0.375,0) {};
                \node [draw, color=gray, rounded rectangle, minimum height = 1.5em, minimum width = 6.5em, rounded rectangle arc length = 180] at (0,0) {};
            \end{tikzpicture}
            \\
            \begin{tikzpicture}
                \draw[fill=black] (-0.75,0) -- (0.75,0);
                \draw[fill=black] (0,0) circle (.5mm);
                \draw[fill=black] (-0.75,0) circle (.5mm);
                \draw[fill=black] (0.75,0) circle (.5mm);
                \draw[color=gray] (0.75,0) circle (.15cm and .15cm);
                \node [draw, color=gray, rounded rectangle, minimum height = 1.2em, minimum width = 4em, rounded rectangle arc length = 180] at (0.375,0) {};
                \node [draw, color=gray, rounded rectangle, minimum height = 1.5em, minimum width = 6.5em, rounded rectangle arc length = 180] at (0,0) {};
            \end{tikzpicture}
            \\
            \begin{tikzpicture}
                \draw[fill=black] (-0.75,0) -- (0.75,0);
                \draw[fill=black] (0,0) circle (.5mm);
                \draw[fill=black] (-0.75,0) circle (.5mm);
                \draw[fill=black] (0.75,0) circle (.5mm);
                \draw[color=gray] (-0.75,0) circle (.15cm and .15cm);
                \draw[color=gray] (0.75,0) circle (.15cm and .15cm);
                \node [draw, color=gray, rounded rectangle, minimum height = 1.5em, minimum width = 6.5em, rounded rectangle arc length = 180] at (0,0) {};
            \end{tikzpicture}
            \\
            \begin{tikzpicture}
                \draw[fill=black] (-0.75,0) -- (0.75,0);
                \draw[fill=black] (0,0) circle (.5mm);
                \draw[fill=black] (-0.75,0) circle (.5mm);
                \draw[fill=black] (0.75,0) circle (.5mm);
                \draw[color=gray] (0,0) circle (.15cm and .15cm);
                \node [draw, color=gray, rounded rectangle, minimum height = 1.2em, minimum width = 4em, rounded rectangle arc length = 180] at (-0.375,0) {};
                \node [draw, color=gray, rounded rectangle, minimum height = 1.5em, minimum width = 6.5em, rounded rectangle arc length = 180] at (0,0) {};
            \end{tikzpicture}
            \\
            \begin{tikzpicture}
                \draw[fill=black] (-0.75,0) -- (0.75,0);
                \draw[fill=black] (0,0) circle (.5mm);
                \draw[fill=black] (-0.75,0) circle (.5mm);
                \draw[fill=black] (0.75,0) circle (.5mm);
                \draw[color=gray] (0,0) circle (.15cm and .15cm);
                \node [draw, color=gray, rounded rectangle, minimum height = 1.2em, minimum width = 4em, rounded rectangle arc length = 180] at (0.375,0) {};
                \node [draw, color=gray, rounded rectangle, minimum height = 1.5em, minimum width = 6.5em, rounded rectangle arc length = 180] at (0,0) {};
            \end{tikzpicture}
        \end{tabular}
    \end{tabular}
\caption{Every complete tubing/admissible tubing of the ``$1$-loop bubble", ``$2$-site chain", and ``$3$-site chain".}
\label{fig_tubingexamples}
\end{figure}

\begin{definition}
    To a graph $G=(V,E)$ with at least one edge, we associate the \emph{line graph} $L(G)$, which has a vertex for each edge $e\in E$ and an edge between any $e,e'\in E$ which share at least one vertex.
\end{definition}

\begin{lemma}\label{lemma_linecorrespondence}
    There is a correspondence
    \begin{center}
        \vspace{-0.15in}
        $$\begin{array}{crc}
            \begin{array}{c} \left\{\begin{array}{c}\text{complete tubings}\\\text{of }G\end{array}\right\} \end{array} & \longleftrightarrow & \begin{array}{c} \left\{\begin{array}{c}\text{admissible tubings}\\\text{of }L(G)\end{array}\right\} \end{array} \\[3ex]
            \mathbf T & \longmapsto & \big\{L(T):T\in\mathbf{T}\textnormal{ with at least one edge}\big\},
        \end{array}$$
    \end{center}
    where $L(T)$ is viewed naturally as a subgraph of $L(G)$.
\end{lemma}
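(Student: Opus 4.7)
The plan is to exhibit an explicit inverse to the stated map and then show that the pairwise defining conditions (non-overlapping in $G$ versus compatible and induced in $L(G)$) correspond under it. For any induced, connected subgraph $U$ of $L(G)$, I would define $T_U\subseteq G$ to have edge set $V(U)$ (viewed as edges of $G$) and vertex set equal to the union of the endpoints of those edges; the inverse would then send $\mathbf{S}\in\Adm(L(G))$ to $\{T_U:U\in\mathbf{S}\}\cup\{\{v\}:v\in V\}$. Connectivity of $T_U$ follows from connectivity of $U$, since a walk in $U$ from $e$ to $e'$ is a sequence of edges of $G$ each sharing an endpoint with the next, which stitches into a walk in $T_U$ between any two of its vertices.

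To confirm that the forward and inverse maps are mutually inverse on the relevant classes, I would verify $L(T_U)=U$ (the point being that $U$ is induced in $L(G)$, so any edge of $L(G)$ between two elements of $V(U)$ already lies in $U$) and $T_{L(T)}=T$ (using that a tube with at least one edge is connected and hence has no isolated vertices, so its vertex set coincides with the set of endpoints of its edges). The same observation shows $L(T)$ really is an induced subgraph of $L(G)$, so the forward map lands in the set of induced tubes.

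The heart of the proof is translating ``non-overlapping'' into ``compatible.'' Nested tubes $S\subseteq T$ in $G$ yield nested $L(S)\subseteq L(T)$ in $L(G)$, and conversely by the identities in the previous paragraph. For tubes $S,T$ with $V(S)\cap V(T)=\varnothing$, any edge of $L(G)$ between $L(S)$ and $L(T)$ would arise from two edges of $G$ sharing an endpoint, forcing a common vertex of $S$ and $T$; so no such edge exists and $L(S),L(T)$ are compatible in $L(G)$. Conversely, compatible and not nested in $L(G)$ means vertex-disjoint with no connecting edges, which by the same observation forces $V(S)\cap V(T)=\varnothing$ in $G$. Singleton tubes $\{v\}$ of $G$ are nested with every tube containing them and are discarded by the map, so they cause no issue.

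Finally, since both ``complete tubing'' and ``admissible tubing'' are maximal collections subject to a pairwise condition, the pairwise correspondence promotes to a bijection of maximal collections: an induced tube $U$ of $L(G)$ compatible with every element of the image of $\mathbf{T}$ comes from a tube $T_U$ non-overlapping with every element of $\mathbf{T}$ (including the singletons, automatically), so $T_U\in\mathbf{T}$ by maximality and hence $U=L(T_U)$ lies in the image; the reverse maximality is symmetric. I expect the main obstacle to be the case analysis in the compatibility translation, in particular keeping straight how ``edge between $L(S)$ and $L(T)$ in $L(G)$'' unpacks to ``shared vertex of $S$ and $T$ in $G$'' and handling the edge cases (singletons, single-edge tubes, overlapping versus merely incompatible) cleanly; the rest is bookkeeping of definitions.
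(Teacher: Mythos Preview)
Your proposal is correct and follows essentially the same approach as the paper: establish a bijection $T\leftrightarrow L(T)$ between tubes of $G$ with at least one edge and induced tubes of $L(G)$, and check that ``overlapping in $G$'' corresponds to ``incompatible in $L(G)$'' under it. The paper compresses all of this into a single sentence, whereas you spell out the inverse map, the mutual-inverse verification, the compatibility translation, and the promotion to maximal collections; but the underlying idea is the same.
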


\begin{proof}
    It suffices to observe that there is a correspondence
    \begin{center}
        \vspace{-0.15in}
        $$\begin{array}{crc}
        	\begin{array}{c} \left\{\begin{array}{c}\text{tubes of $G$ with}\\\text{at least one edge}\end{array}\right\} \end{array} & \longleftrightarrow & \begin{array}{c} \left\{\begin{array}{c}\text{induced tubes}\\\text{of $L(G)$}\end{array}\right\} \end{array} \\[3ex]
            T & \longmapsto & L(T),
        \end{array}$$
    \end{center}
    under which overlapping tubes $S,T$ correspond to incompatible tubes $L(S),L(T)$.
\end{proof}

This lemma allows us to translate properties of admissible tubings into properties of complete tubings, so we devote our attention to admissible tubings for the moment. First, we prove a bijection which gives a convenient way to label the tubes of an admissible tubing.

\begin{lemma}\label{lemma_bijection}
    Fix an admissible tubing $\mathbf{T}$ of $G=(V,E)$. For any $T\in\mathbf T$, there is a unique vertex $v\in T$ which is outside every $S\in\mathbf{T}$ with $S\subsetneq T$. Moreover, the map
    $$\begin{array}{crc}
        \mathbf T & \longrightarrow & V \\[1ex]
        T & \longmapsto & v
    \end{array}$$
    is a bijection, so $\mathbf{T}$ has size $n=|V|$.
\end{lemma}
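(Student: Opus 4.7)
The plan is to first exploit admissibility of $\mathbf{T}$ to establish a structural lemma: the \emph{children} of $T$ in $\mathbf{T}$---that is, the maximal (under inclusion) proper sub-tubes $C_1,\ldots,C_k\in\mathbf{T}$ with $C_i\subsetneq T$---are pairwise vertex-disjoint and have no edges between them in $G$. Indeed, two distinct children are compatible (both live in the admissible $\mathbf{T}$) and non-nested, so by definition of compatibility they cannot share a vertex (that would make them overlapping, hence incompatible) and cannot be joined by any edge of $G$.

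First I would prove existence of $v$ by contradiction. If the $V(C_i)$ covered $V(T)$, they would partition $V(T)$ with no edges of $G$ between distinct parts. Since $T$ is an induced subgraph of $G$, this forces $T$ to be disconnected (when $k\geq 2$), or $T=C_1$ (when $k=1$), or $T$ empty (when $k=0$)---all contradicting that $T$ is a tube.

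Next, for uniqueness, I would suppose two distinct vertices $v,w\in T$ both avoid every $C_i$. The crucial construction is $\tilde T_v=\{v\}\cup\bigcup_{i\in I}V(C_i)$, viewed as an induced subgraph of $G$, where $I$ is the set of indices $i$ with $v$ adjacent to $C_i$ in $G$. This is connected, and I would verify its compatibility with each $S\in\mathbf{T}$ by cases: if $S\supseteq T$, or $S\subseteq C_i$ with $i\in I$, then $S$ is nested with $\tilde T_v$; if $S\subseteq C_i$ with $i\notin I$, or $S$ is not nested with $T$, then $S$ is vertex-disjoint from $\tilde T_v$ and shares no edge with it---the no-edge part following from the structural lemma on children together with compatibility of $S$ with $T$. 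Maximality of $\mathbf{T}$ then forces $\tilde T_v\in\mathbf{T}$, but $w\notin\tilde T_v$ gives $\tilde T_v\subsetneq T$, placing $v$ in a proper sub-tube of $T$---contradiction.

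Finally I would upgrade to a bijection $\mathbf{T}\to V$. Injectivity is immediate: two tubes yielding the same vertex share it, hence are non-disjoint compatible tubes, hence nested, and whichever is smaller then contains the free vertex of the larger. For surjectivity, each $v\in V$ must lie in some tube of $\mathbf{T}$ (the smallest such tube then realizes $v$ as its free vertex); otherwise, the analogous construction $\{v\}\cup\bigcup_{M\in I_v}V(M)$ taken over maximal tubes $M\in\mathbf{T}$ adjacent to $v$ produces a compatible induced tube absent from $\mathbf{T}$, contradicting maximality. I expect the main obstacle to be the bookkeeping in the uniqueness step---systematically verifying compatibility of $\tilde T_v$ with every $S\in\mathbf{T}$, particularly when $S$ is a sub-tube of a child not indexed by $I$.
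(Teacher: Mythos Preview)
Your proposal is correct and follows essentially the same approach as the paper: identify the maximal proper sub-tubes (your ``children'' are the paper's $\{S_j\}$), use connectedness of $T$ for existence, construct the induced tube on $\{v\}\cup\bigcup_{i\in I}C_i$ to derive a maximality contradiction for uniqueness, and invert via the minimal tube containing a given vertex. The main difference is one of detail: you carry out the compatibility verification for $\tilde T_v$ case by case, whereas the paper asserts it in one line; conversely, the paper handles surjectivity more simply by observing that the connected component of $v$ is automatically compatible with every induced tube and hence lies in $\mathbf{T}$ by maximality, avoiding your second auxiliary construction.
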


\begin{proof}
    Fix $T\in\mathbf{T}$, and let $\{S_j\}$ be the elements of $\{S\in\mathbf{T}:S\subsetneq T\}$ which are maximal with respect to inclusion. The $\{S_j\}$ are compatible by definition, so there cannot be any edges between them. However, $T$ is connected, so there must be at least one vertex
    $$v\in T\setminus\bigcup_j S_j.$$
    In order to see that $v$ is unique, suppose there are two such vertices. Then the tube induced by
    $$\{v\}\cup\bigcup_{\substack{S_j\text{ incompatible}\\\text{with }\{v\}}}S_j$$
    is strictly smaller than $T$ and contains $v$. This tube cannot belong to $\mathbf T$ (by definition $v$ is outside every $S\subsetneq T$), so we can produce a larger collection of compatible tubes by adjoining it to $\mathbf T$, contradicting maximality.

    In order to see that the map $T\mapsto v$ is a bijection, observe it has an inverse: the map which sends $v\in V$ to the minimal element of $\{T\in\mathbf{T}:v\in T\}$. This set is nonempty by the maximality of $\mathbf{T}$ (for example, $\mathbf{T}$ must contain the connected component of $v$), and it must have a unique minimal element by compatibility (any two tubes containing $v$ must be nested).
\end{proof}

Applying Lemma \ref{lemma_linecorrespondence}, we obtain an immediate corollary.

\begin{corollary}\label{cor_bijection}
    Fix a complete tubing $\mathbf{T}$ of ${G=(V,E)}$. For any $T\in\mathbf T$, there is a unique edge $e\in T$ which is outside every $S\in\mathbf{T}$ with $S\subsetneq T$. Moreover, the map
    $$\begin{array}{crc}
        \Big\{T\in\mathbf{T}\text{ with at least one edge}\Big\} & \longrightarrow & E \\[1ex]
        T & \longmapsto & e
    \end{array}$$
    is a bijection, so $\mathbf{T}$ has size $|V|+|E|$.\qed
\end{corollary}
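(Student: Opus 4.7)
The plan is to deduce the corollary from Lemma \ref{lemma_bijection} through the line-graph correspondence of Lemma \ref{lemma_linecorrespondence}. Given a complete tubing $\mathbf{T}$ of $G$, set $\mathbf{T}^{\circ}=\{T\in\mathbf{T}:T\text{ has at least one edge}\}$ and let $\mathbf{T}^{*}=\{L(T):T\in\mathbf{T}^{\circ}\}$, which is an admissible tubing of $L(G)$ by that lemma. The assignment $T\mapsto L(T)$ is an order-isomorphism between $\mathbf{T}^{\circ}$ and $\mathbf{T}^{*}$ (viewed as posets under inclusion): it is order-preserving because edges of a sub-tube are edges of the ambient tube, and the converse holds because a tube with at least one edge is connected and therefore equals the induced subgraph on the endpoints of its edges, so $S$ can be recovered from $L(S)$.

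Next I apply Lemma \ref{lemma_bijection} to $\mathbf{T}^{*}$ inside $L(G)$. For each $L(T)\in\mathbf{T}^{*}$ this produces a unique vertex of $L(G)$---that is, a unique edge $e$ of $G$ contained in $T$---lying outside every $L(S)\in\mathbf{T}^{*}$ with $L(S)\subsetneq L(T)$. By the order-isomorphism, these conditions translate to $e\in T$ and $e\notin S$ for every $S\in\mathbf{T}^{\circ}$ with $S\subsetneq T$. Because the tubes in $\mathbf{T}\setminus\mathbf{T}^{\circ}$ are singletons containing no edges, $e$ automatically lies outside every proper sub-tube of $T$ in all of $\mathbf{T}$. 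The bijectivity of $T\mapsto e$ from $\mathbf{T}^{\circ}$ onto $V(L(G))=E$ is then inherited directly from Lemma \ref{lemma_bijection}.

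For the cardinality, every complete tubing contains each singleton $\{v\}$: a singleton is either nested inside or vertex-disjoint from any other tube, hence never overlaps with any tube, so maximality of $\mathbf{T}$ forces all $|V|$ singletons to appear. Combined with $|\mathbf{T}^{\circ}|=|E|$ from the bijection just established, this yields $|\mathbf{T}|=|V|+|E|$. I do not anticipate any serious obstacle; the only step that warrants a second glance is confirming the inclusion-preserving identification between $\mathbf{T}^{\circ}$ and $\mathbf{T}^{*}$, which is essentially already built into the proof of Lemma \ref{lemma_linecorrespondence}.
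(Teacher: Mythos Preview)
Your proposal is correct and follows exactly the approach the paper indicates: the paper's proof is simply the sentence ``Applying Lemma \ref{lemma_linecorrespondence}, we obtain an immediate corollary,'' and you have filled in precisely those details (transferring Lemma \ref{lemma_bijection} along the correspondence $T\mapsto L(T)$ and accounting for the singleton tubes). One minor imprecision: a tube with at least one edge need not be the \emph{induced} subgraph on the endpoints of its edges, but what you actually use---that such a tube is determined by its edge set, hence by $L(T)$---is correct and suffices for the order-isomorphism.
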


\begin{remark}
    From now on, whenever we have an admissible tubing $\mathbf T$, we will simplify notation by labeling the vertices $V=\{v_1,\dots,v_n\}$ and tubes $\mathbf T=\{T_1,\dots,T_n\}$ according to the bijection in Lemma \ref{lemma_bijection}, \ie such that $T_i\leftrightarrow v_i$.
\end{remark}

The proof of Lemma \ref{lemma_bijection} suggests that we can think of admissible tubings as being built up by adding vertices one-by-one in some order, merging each with incompatible tubes as we go. The following lemma makes this precise.

\begin{lemma}\label{lemma_order}
    Fix a graph $G=(V,E)$.

    \begin{enumerate}[label=(\alph*)]

        \item
        A total order $<$ on $V$ induces a unique admissible tubing $\{T_1,\dots,T_n\}$ of $G$ which respects $<$, in the sense that
        $$\Big[T_i\subsetneq T_j\Big]\implies\Big[v_i<v_j\Big],$$
        where the vertices are labeled according to the bijection in Lemma \ref{lemma_bijection}. Moreover, every admissible tubing of $G$ arises in this way.

        \item
        Assume $G$ contains no edges from a vertex to itself. An admissible tubing $\{T_1,\dots,T_n\}$ of $G$ induces a unique acyclic orientation of $G$ which respects $\mathbf{T}$, in the sense that
        $$\Big[(v_i\to v_j)\Big]\implies\Big[T_i\subsetneq T_j\Big],$$
        where the vertices are labeled according to the bijection in Lemma \ref{lemma_bijection}. Moreover, every acyclic orientation of $G$ arises in this way.
        
    \end{enumerate}
    
\end{lemma}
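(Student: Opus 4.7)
My plan is to prove (a) by an explicit component-based construction and then derive (b) via linear extensions of the partial orders on $V$ induced by tubings and by orientations.

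For (a), given $v_1<\cdots<v_n$, I would set
$$T_i := \text{the connected component of } v_i \text{ in } G[\{v_1,\dots,v_i\}].$$
Each $T_i$ is an induced tube containing $v_i$ but no $v_j$ with $j>i$, so the $T_i$ are $n$ distinct induced tubes. For $i<j$, any shared vertex or connecting edge between $T_i$ and $T_j$ sits inside $G[\{v_1,\dots,v_j\}]$, so connectedness of $T_i$ places it in the component of $v_j$, giving $T_i\subseteq T_j$; hence the collection is pairwise compatible. Maximality (so admissibility) is then free from Lemma \ref{lemma_bijection}: our $n$-element compatible family extends to some admissible tubing, which by Lemma \ref{lemma_bijection} has size $n$, so the extension equals the family itself. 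The bijection of Lemma \ref{lemma_bijection} assigns to $T_i$ the unique vertex outside every $T_j\subsetneq T_i$; this is $v_i$, since any such $T_j$ has $v_j\in T_j\subseteq T_i\subseteq\{v_1,\dots,v_i\}$, hence $j\le i$, and properness forces $j<i$, so $T_j\subseteq\{v_1,\dots,v_{i-1}\}$ misses $v_i$. The same index bound is the respect property.

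Uniqueness of the admissible tubing respecting $<$ would follow by induction on $|V|$: when $G$ is connected, $G$ itself is a tube compatible with all others, so $G\in\mathbf T$, and the respect property forces $G=T_n$ and $v_n\notin T_i$ for $i<n$; then $\mathbf T\setminus\{T_n\}$ is an admissible tubing of $G\setminus\{v_n\}$ respecting the restricted order, unique by induction. The disconnected case reduces to components. For the final claim of (a), any admissible tubing $\mathbf T$ respects the linear extension of the strict partial order $v_i\prec v_j\iff T_i\subsetneq T_j$, so by uniqueness it arises from that linear extension.

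For (b), given $\mathbf T$ and an edge $v_iv_j$ (with $i\ne j$ since there are no loops), if $T_i\cap T_j=\emptyset$ then $v_iv_j$ would be an edge between disjoint non-nested tubes, contradicting compatibility; so $T_i,T_j$ share a vertex, and compatibility plus distinctness forces strict nesting. Thus $v_i\to v_j\iff T_i\subsetneq T_j$ is a well-defined orientation; acyclicity is immediate from transitivity of strict containment, uniqueness is forced by the respect condition, and respect itself holds by definition. Every acyclic orientation arises: its transitive closure is a partial order on $V$ that extends to a total order, and the admissible tubing produced by (a) satisfies $v_i\in T_j$ and $v_j\notin T_i$ whenever $v_i<v_j$ and $v_iv_j\in E$, so $T_i\subsetneq T_j$ and the induced orientation recovers the original one. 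The most delicate point is establishing that the explicit construction in (a) is maximal; a direct argument via minimal elements of $T_i\setminus T$ gets tangled in cases, but invoking Lemma \ref{lemma_bijection}'s cardinality statement promotes ``compatible family of size $n$'' to ``admissible'' for free.
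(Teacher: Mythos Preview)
Your proof is correct and follows essentially the same route as the paper: your explicit formula $T_i=\text{component of }v_i$ in $G[\{v_1,\dots,v_i\}]$ is exactly the unrolled version of the paper's recursive construction (add $v_n$ last and take its component), and your uniqueness argument and the surjectivity-via-linear-extension step match the paper's almost verbatim. The one pleasant shortcut you take is invoking the cardinality statement of Lemma~\ref{lemma_bijection} to promote ``compatible of size $n$'' directly to ``admissible,'' which the paper leaves implicit in its inductive step.
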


\begin{proof}\,
    \begin{enumerate}[label=(\alph*)]

        \item
        Fix a total order
        $$v_1<v_2<\dots<v_n$$
        on $V$. When $n=1$, the claim is trivial.
        
        Inducing on $n$, let $G\setminus v_n$ be the graph obtained by deleting $v_n$ from $G$, and assume we have an admissible tubing $\{T_1,\dots,T_{n-1}\}$ of $G\setminus v_n$ which respects $<$. Letting $T_n$ be the connected component of $v_n$, we obtain an admissible tubing $\{T_1,\dots,T_n\}$ of $G$ which still respects $<$.

        To see that this tubing is unique, suppose that $\mathbf{T}=\{T_1,\dots,T_n\}$ is an admissible tubing of $G$ which respects $<$. Since $v_n>v_j$ for all $j$, $v_n$ cannot be contained in any tube strictly larger than $T_n$. On the other hand, Lemma \ref{lemma_bijection} implies that $v_n$ cannot be contained in any tube strictly smaller than $T_n$. Thus $T_n$ is the only tube containing $v_n$, so it must be the connected component of $v_n$ (otherwise we could add this component to $\mathbf{T}$, violating maximality). Now the rest of the tubes $\{T_1,\dots,T_{n-1}\}$ form an admissible tubing of $G\setminus v_n$ which respects $<$, hence they are uniquely determined by induction.

        To see that every admissible tubing arises from a total order, fix an admissible tubing $\mathbf{T}=\{T_1,\dots,T_n\}$. We can define a partial order on $V$ by
        $$\Big[v_i\leq v_j\Big]\iff\Big[T_i\subseteq T_j\Big],$$
        labeling vertices as usual. Extending this to a total order in any way, we obtain a total order which will induce $\mathbf{T}$.

        \item
        Given an admissible tubing $\{T_1,\dots,T_n\}$ of $G=(V,E)$, we can define a unique acyclic orientation by orienting each $e=\{v_i,v_j\}\in E$ as
        $$\begin{cases}(v_i\to v_j)&\text{if $T_i\subsetneq T_j$},\\(v_i\leftarrow v_j)&\text{if $T_i\supsetneq T_j$},\end{cases}$$
        since compatibility implies that one of these containments holds (by hypothesis, $i\neq j$).
        
        To see that every acyclic orientation arises from an admissible tubing, fix an acyclic orientation of $G$. Consider the partial order on $V$ defined by
        $$\Big[v_i\leq v_j\Big]\iff\Big[\text{there is a path }v_i\to\dots\to v_j\Big].$$
        After extending this to a total order $<$ in any way, part (1) implies there is an admissible tubing which respects $<$, and therefore induces the original orientation.\qedhere

    \end{enumerate}
\end{proof}

\begin{figure}
    \renewcommand{\arraystretch}{1.5}
    \begin{tabular}{ccc}
        $<$ & $\mathbf T$ & $H$
        \\ \hline
        $1<2<3$ &
        \begin{tikzpicture}[baseline=-2]
            \draw[fill=black] (-0.75,0) -- (0.75,0);
            \draw[fill=black] (0,0) circle (.5mm);
            \draw[fill=black] (-0.75,0) circle (.5mm);
            \draw[fill=black] (0.75,0) circle (.5mm);
            \draw[color=gray] (-0.75,0) circle (.15cm and .15cm);
            \node [draw, color=gray, rounded rectangle, minimum height = 1.2em, minimum width = 4em, rounded rectangle arc length = 180] at (-0.375,0) {};
            \node [draw, color=gray, rounded rectangle, minimum height = 1.5em, minimum width = 6.5em, rounded rectangle arc length = 180] at (0,0) {};
        \end{tikzpicture}
        &
        \begin{tikzpicture}[baseline=-2,decoration={markings,mark=at position 0.3 with {\arrow{>}},mark=at position 0.8 with {\arrow{>}}}]
            \draw[postaction={decorate}] (-0.75,0) -- (0.75,0);
            \draw[fill=black] (0,0) circle (.5mm);
            \draw[fill=black] (-0.75,0) circle (.5mm);
            \draw[fill=black] (0.75,0) circle (.5mm);
        \end{tikzpicture}
        \\
        $3<2<1$ &
        \begin{tikzpicture}[baseline=-2]
            \draw[fill=black] (-0.75,0) -- (0.75,0);
            \draw[fill=black] (0,0) circle (.5mm);
            \draw[fill=black] (-0.75,0) circle (.5mm);
            \draw[fill=black] (0.75,0) circle (.5mm);
            \draw[color=gray] (0.75,0) circle (.15cm and .15cm);
            \node [draw, color=gray, rounded rectangle, minimum height = 1.2em, minimum width = 4em, rounded rectangle arc length = 180] at (0.375,0) {};
            \node [draw, color=gray, rounded rectangle, minimum height = 1.5em, minimum width = 6.5em, rounded rectangle arc length = 180] at (0,0) {};
        \end{tikzpicture}
        &
        \begin{tikzpicture}[baseline=-2,decoration={markings,mark=at position 0.3 with {\arrow{<}},mark=at position 0.8 with {\arrow{<}}}]
            \draw[postaction={decorate}] (-0.75,0) -- (0.75,0);
            \draw[fill=black] (0,0) circle (.5mm);
            \draw[fill=black] (-0.75,0) circle (.5mm);
            \draw[fill=black] (0.75,0) circle (.5mm);
        \end{tikzpicture}
        \\
        $1<3<2$
        &
        \begin{tikzpicture}[baseline=-2]
            \draw[fill=black] (-0.75,0) -- (0.75,0);
            \draw[fill=black] (0,0) circle (.5mm);
            \draw[fill=black] (-0.75,0) circle (.5mm);
            \draw[fill=black] (0.75,0) circle (.5mm);
            \draw[color=gray] (-0.75,0) circle (.15cm and .15cm);
            \draw[color=gray] (0.75,0) circle (.15cm and .15cm);
            \node [draw, color=gray, rounded rectangle, minimum height = 1.5em, minimum width = 6.5em, rounded rectangle arc length = 180] at (0,0) {};
        \end{tikzpicture}
        &
        \begin{tikzpicture}[baseline=-2,decoration={markings,mark=at position 0.3 with {\arrow{>}},mark=at position 0.8 with {\arrow{<}}}]
            \draw[postaction={decorate}] (-0.75,0) -- (0.75,0);
            \draw[fill=black] (0,0) circle (.5mm);
            \draw[fill=black] (-0.75,0) circle (.5mm);
            \draw[fill=black] (0.75,0) circle (.5mm);
        \end{tikzpicture}
        \\
        $3<1<2$
        &
        \begin{tikzpicture}[baseline=-2]
            \draw[fill=black] (-0.75,0) -- (0.75,0);
            \draw[fill=black] (0,0) circle (.5mm);
            \draw[fill=black] (-0.75,0) circle (.5mm);
            \draw[fill=black] (0.75,0) circle (.5mm);
            \draw[color=gray] (-0.75,0) circle (.15cm and .15cm);
            \draw[color=gray] (0.75,0) circle (.15cm and .15cm);
            \node [draw, color=gray, rounded rectangle, minimum height = 1.5em, minimum width = 6.5em, rounded rectangle arc length = 180] at (0,0) {};
        \end{tikzpicture}
        &
        \begin{tikzpicture}[baseline=-2,decoration={markings,mark=at position 0.3 with {\arrow{>}},mark=at position 0.8 with {\arrow{<}}}]
            \draw[postaction={decorate}] (-0.75,0) -- (0.75,0);
            \draw[fill=black] (0,0) circle (.5mm);
            \draw[fill=black] (-0.75,0) circle (.5mm);
            \draw[fill=black] (0.75,0) circle (.5mm);
        \end{tikzpicture}
        \\
        $2<1<3$
        &
        \begin{tikzpicture}[baseline=-2]
            \draw[fill=black] (-0.75,0) -- (0.75,0);
            \draw[fill=black] (0,0) circle (.5mm);
            \draw[fill=black] (-0.75,0) circle (.5mm);
            \draw[fill=black] (0.75,0) circle (.5mm);
            \draw[color=gray] (0,0) circle (.15cm and .15cm);
            \node [draw, color=gray, rounded rectangle, minimum height = 1.2em, minimum width = 4em, rounded rectangle arc length = 180] at (-0.375,0) {};
            \node [draw, color=gray, rounded rectangle, minimum height = 1.5em, minimum width = 6.5em, rounded rectangle arc length = 180] at (0,0) {};
        \end{tikzpicture}
        &
        \begin{tikzpicture}[baseline=-2,decoration={markings,mark=at position 0.3 with {\arrow{<}},mark=at position 0.8 with {\arrow{>}}}]
            \draw[postaction={decorate}] (-0.75,0) -- (0.75,0);
            \draw[fill=black] (0,0) circle (.5mm);
            \draw[fill=black] (-0.75,0) circle (.5mm);
            \draw[fill=black] (0.75,0) circle (.5mm);
        \end{tikzpicture}
        \\
        $2<3<1$
        &
        \begin{tikzpicture}[baseline=-2]
            \draw[fill=black] (-0.75,0) -- (0.75,0);
            \draw[fill=black] (0,0) circle (.5mm);
            \draw[fill=black] (-0.75,0) circle (.5mm);
            \draw[fill=black] (0.75,0) circle (.5mm);
            \draw[color=gray] (0,0) circle (.15cm and .15cm);
            \node [draw, color=gray, rounded rectangle, minimum height = 1.2em, minimum width = 4em, rounded rectangle arc length = 180] at (0.375,0) {};
            \node [draw, color=gray, rounded rectangle, minimum height = 1.5em, minimum width = 6.5em, rounded rectangle arc length = 180] at (0,0) {};
        \end{tikzpicture}
        &
        \begin{tikzpicture}[baseline=-2,decoration={markings,mark=at position 0.3 with {\arrow{<}},mark=at position 0.8 with {\arrow{>}}}]
            \draw[postaction={decorate}] (-0.75,0) -- (0.75,0);
            \draw[fill=black] (0,0) circle (.5mm);
            \draw[fill=black] (-0.75,0) circle (.5mm);
            \draw[fill=black] (0.75,0) circle (.5mm);
        \end{tikzpicture}
    \end{tabular}
\caption{Illustration of Lemma \ref{lemma_order} for the $3$-site graph, with vertices $\{1,2,3\}$ from left to right. Six total orders on the vertices induce five distinct admissible tubings, which in turn induce four distinct acyclic orientations.}
\label{fig_orderexample}
\end{figure}
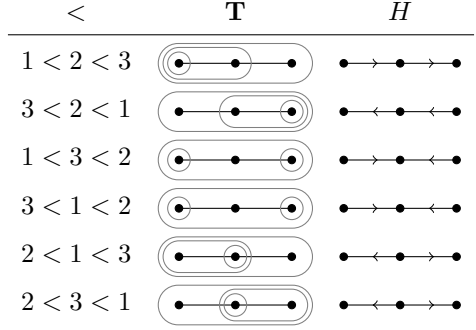

Figure \ref{fig_orderexample} illustrates the lemma for admissible tubings of the $3$-site chain.

\begin{remark}
    Applying Lemma \ref{lemma_linecorrespondence}, it follows that every complete tubing of $G$ is induced by a total order on its edges. In particular, we can list the complete (resp. admissible) tubings of $G$ by listing the total orders on $E$ (resp. $V$), though many total orders may give rise to the same tubing.
\end{remark}

The name ``admissible" is borrowed from \cite{F-P-S-W}, in which a notion of admissible collections of subgraphs is defined that is similar to ours, though not equivalent. For comparison with their definition, we give another characterization of admissible tubings. First, let us extend our definition to directed acyclic graphs.

\begin{definition}
    Let $H$ be a directed acyclic graph with underlying graph $G$. By an \emph{admissible tubing} of $H$, we mean an admissible tubing of $G$ which induces the orientation $H$, in the sense of the previous lemma.
\end{definition}

In this context, we can rephrase the condition of compatibility. For a collection of $n$ induced tubes (recall Definition \ref{def_tubes}) of $H$, being compatible is equivalent to being non-overlapping and each tube having no ``incoming" edges.

\begin{proposition}
    Let $H=(V,E)$ be a directed acyclic graph. A collection $\mathbf{T}$ of induced tubes of $H$ is an admissible tubing of $H$ if and only if:
    \begin{enumerate}

        \item
        $\mathbf{T}$ has size $n=|V|$.

        \item
        $\mathbf{T}$ is non-overlapping.

        \item
        For each tube $T\in\mathbf{T}$, there are no edges $(v\to w)$ with $v\notin T$, $w\in T$.
        
    \end{enumerate}
\end{proposition}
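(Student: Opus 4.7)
The plan is to prove both implications using the vertex-labeling bijection of Lemma \ref{lemma_bijection} throughout, writing $\mathbf{T}=\{T_1,\dots,T_n\}$ with $T_i\leftrightarrow v_i$. For the forward direction, assume $\mathbf{T}$ is an admissible tubing of $H$. Property (1) is immediate from Lemma \ref{lemma_bijection}, and (2) is built into admissibility because overlapping tubes are automatically incompatible (the remark after Definition \ref{def_tubes}). For (3), I would argue by contradiction: suppose some $T=T_j\in\mathbf{T}$ admits an edge $(v_i\to v_k)$ of $H$ with $v_i\notin T_j$ and $v_k\in T_j$. Since $v_k\in T_j$ and $T_k$ is, by the inverse bijection constructed in the proof of Lemma \ref{lemma_bijection}, the minimal element of $\mathbf{T}$ containing $v_k$, one has $T_k\subseteq T_j$. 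Lemma \ref{lemma_order}(b) applied to the oriented edge $v_i\to v_k$ gives $T_i\subsetneq T_k$, hence $v_i\in T_i\subsetneq T_k\subseteq T_j$, contradicting $v_i\notin T_j$.

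For the converse, assume (1)--(3). I would first check that $\mathbf{T}$ is a collection of pairwise compatible induced tubes: any two $S,T\in\mathbf{T}$ are either nested (hence compatible) or, by (2), disjoint, in which case (3) applied to either tube forbids an edge between them. Consequently $\mathbf{T}$ extends to some admissible tubing $\mathbf{T}'\supseteq\mathbf{T}$, and Lemma \ref{lemma_bijection} combined with (1) forces $|\mathbf{T}'|=n=|\mathbf{T}|$, so $\mathbf{T}'=\mathbf{T}$. Thus $\mathbf{T}$ is an admissible tubing of the underlying graph $G$.

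It remains to verify that $\mathbf{T}$ induces the orientation $H$. Fix an edge oriented $v_i\to v_j$ in $H$: applying (3) to $T_j$ forces $v_i\in T_j$, and the minimality of $T_i$ among elements of $\mathbf{T}$ containing $v_i$ then yields $T_i\subseteq T_j$. Equality would force $v_i=v_j$ by the bijection, contradicting the fact that an acyclic orientation has no self-loops, so $T_i\subsetneq T_j$, matching the criterion of Lemma \ref{lemma_order}(b). The only subtle step is the forward argument for (3): it crucially uses the minimality characterization supplied by the inverse bijection to identify an auxiliary tube $T_k\subseteq T_j$, after which the rest of the proof is a straightforward containment chase.
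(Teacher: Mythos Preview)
Your argument is correct and follows essentially the same route as the paper: Lemma~\ref{lemma_bijection} gives (1), compatibility gives (2), and (3) is proved by the same containment chase $T_i\subsetneq T_k\subseteq T_j$ using the minimality description of the inverse bijection together with Lemma~\ref{lemma_order}(b); for the converse, both you and the paper deduce compatibility from (2)+(3) and then invoke the size bound from Lemma~\ref{lemma_bijection} to get maximality. The one substantive addition in your write-up is the final paragraph verifying that $\mathbf{T}$ actually induces the given orientation $H$; the paper's proof stops after establishing that $\mathbf{T}$ is an admissible tubing of the underlying undirected graph and leaves this check implicit, so your version is in fact slightly more complete.
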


\begin{proof}
    First, suppose that $\mathbf{T}=\{T_1,\dots,T_n\}$ is an admissible tubing of $H$, and label the vertices according to the bijection in Lemma \ref{lemma_bijection}. Condition (1) is implied by Lemma \ref{lemma_bijection}, and (2) by compatibility. To see that (3) is satisfied, fix $T\in\mathbf{T}$ and suppose for a contradiction that there is an incoming edge $(v_i\to v_j)$ for some $v_i\notin T$, $v_j\in T$. It follows that $T_i\not\subseteq T$ and $T_j\subseteq T$. Moreover, since $H$ respects $\mathbf{T}$, we have
    $$(v_i\to v_j)\implies T_i\subseteq T_j.$$
    But then both $T_i\not\subseteq T$ and $T_i\subseteq T_j\subseteq T$, a contradiction.

    Conversely, suppose that $\mathbf{T}=\{T_1,\dots,T_n\}$ satisfies (1)-(3). Maximality is guaranteed by (1) and Lemma \ref{lemma_bijection}. To see that these tubes are compatible, suppose that there is an edge between some non-nested $T_i$, $T_j$. Then (2) implies that $T_i$ and $T_j$ are disjoint, so this edge is incoming for either $T_i$ or $T_j$, contradicting (3).
\end{proof}

\section{Bulk Representation}\label{sec_bulk}

In this section, we show that the flat space wavefunction has an expansion in terms of admissible tubings. Let us introduce our main character.

\begin{definition}\label{def_wavefunction}
    Fix a graph $G=(V,E)$, where $V=\{v_1,\dots,v_n\}$. To $G$, we associate the \emph{flat space wavefunction}\footnote{
    This integral differs slightly from the one appearing in \cite{AH-B-P}; we have dispensed of irrelevant factors of $i=\sqrt{-1}$. The two integrals agree since
    $$\int_{-\infty(1-i\epsilon)}^0 ie^{i\alpha z}\d z=\int_{-\infty}^0 e^{\alpha z}\d z$$
    for $\epsilon\ll0$, $\alpha>0$, as can be seen by making the substitution $iz\mapsto z$ and then integrating along a quarter-circular contour.
    }
    $$\psi_G(\mathbf{X},\mathbf{Y})=\bigintsss_{-\infty}^0\dots\bigintsss_{-\infty}^0\prod_{i=1}^n e^{X_i\eta_i}\prod_{e=\{v_i,v_j\}\in E}P_e(Y_e,\eta_i,\eta_j)\d\eta_1\cdots\d\eta_n,$$
    where $\mathbf{X}=\{X_i\}_{i=1}^n$, $\mathbf{Y}=\{Y_e\}_{e\in E}$ are positive real numbers and
    $$P_e(Y_e,\eta_i,\eta_j)=\frac{1}{2Y_e}\left[e^{-Y_e(\eta_i-\eta_j)}\Theta(\eta_i-\eta_j)+e^{-Y_e(\eta_j-\eta_i)}\Theta(\eta_j-\eta_i)-e^{Y_e(\eta_i+\eta_j)}\right]$$
    is the \emph{propagator} associated to the edge $e=\{v_i,v_j\}\in E$.
\end{definition}

There are some preliminary reductions to be made. First, we split $\psi_G$ into terms corresponding to the spanning subgraphs of $G$.

\begin{definition}
    Fix a graph $G=(V,E)$ with spanning subgraph $H=(V,E_H)$, where $V=\{v_1,\dots,v_n\}$. To the pair $(G,H)$, we associate the function
    $$\psi_{G,H}(\mathbf{X},\mathbf{Y})=\bigintsss_{-\infty}^0\dots\bigintsss_{-\infty}^0\prod_{i=1}^n e^{X_i\eta_i}\prod_{e\in E_H}P_e^+\prod_{e\in E\setminus E_H}P_e^-\d\eta_1\cdots\d\eta_n,$$
    where we have split the propagator into two components:
    \begin{align*}
        P_e^+&=e^{-Y_e(\eta_i-\eta_j)}\Theta(\eta_i-\eta_j)+e^{-Y_e(\eta_j-\eta_i)}\Theta(\eta_j-\eta_i),\\
        P_e^-&=e^{Y_e(\eta_i+\eta_j)},
    \end{align*}
    for each edge $e=\{v_i,v_j\}\in E$.
\end{definition}

\begin{lemma}\label{lemma_sumsubgraphs}
    For any graph $G=(V,E)$,
    $$\psi_G=\frac{1}{\prod\limits_{e\in E}(2Y_e)}\sum_{H\subseteq G}(-1)^{|E\setminus E_H|}\psi_{G,H},$$
    where the sum is taken over all spanning subgraphs $H=(V,E_H)$ of $G$.
\end{lemma}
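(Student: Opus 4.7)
The plan is to unravel the definition of $\psi_G$ by expanding the product of propagators and then matching the resulting terms with the functions $\psi_{G,H}$. This is essentially a distributivity argument, combined with linearity of the integral, so there is no serious obstacle; the only thing to verify is that the bookkeeping with signs and the factors of $1/(2Y_e)$ comes out right.

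The key observation is that the propagator in Definition \ref{def_wavefunction} has the shape
\[
P_e(Y_e,\eta_i,\eta_j) \;=\; \frac{1}{2Y_e}\bigl[P_e^+ \;-\; P_e^-\bigr],
\]
so that
\[
\prod_{e\in E} P_e \;=\; \frac{1}{\prod_{e\in E}(2Y_e)}\;\prod_{e\in E}\bigl[P_e^+ - P_e^-\bigr].
\]
Expanding the product on the right via distributivity, each edge $e\in E$ independently contributes either $P_e^+$ or $-P_e^-$. Grouping terms by the set $E_H$ of edges for which we picked $P_e^+$, and letting $H=(V,E_H)$ denote the corresponding spanning subgraph of $G$, this becomes
\[
\prod_{e\in E}\bigl[P_e^+ - P_e^-\bigr] \;=\; \sum_{H\subseteq G}(-1)^{|E\setminus E_H|}\prod_{e\in E_H}P_e^+\prod_{e\in E\setminus E_H}P_e^-.
\]

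The next step is to substitute this expansion into the integral defining $\psi_G$ and interchange the (finite) sum with the integral over $\eta_1,\dots,\eta_n$. The factor $1/\prod_e(2Y_e)$ does not depend on the $\eta_i$, so it comes outside the integral; the remaining integrand on each summand is exactly the integrand of $\psi_{G,H}$. Thus
\[
\psi_G \;=\; \frac{1}{\prod_{e\in E}(2Y_e)}\sum_{H\subseteq G}(-1)^{|E\setminus E_H|}\,\psi_{G,H},
\]
as required. The hardest part, such as it is, is just making sure that the sign convention $(-1)^{|E\setminus E_H|}$ matches the convention that $H$ records the $P_e^+$ edges rather than the $P_e^-$ edges; this is the natural choice since $P_e^+$ encodes the $\Theta$-step contributions that will ultimately be identified with a connected ordering of $\eta$-variables along the subgraph $H$ in later sections.
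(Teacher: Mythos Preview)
Your proof is correct and follows exactly the same approach as the paper: split each propagator as $\frac{1}{2Y_e}(P_e^+-P_e^-)$, expand the product over edges by distributivity, index the resulting terms by spanning subgraphs, and collect the constants. The paper's proof is slightly terser but identical in content.
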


\begin{proof}
    We can rewrite the product $\prod_{e\in E}P(Y_e,\eta_i,\eta_j)$ in Definition \ref{def_wavefunction} by separating $P_e(Y_e,\eta_i,\eta_j)=\frac{1}{2Y_e}(P_e^+-P_e^-)$ and expanding
    $$\prod_{e\in E}(P_e^+-P_e^-)=\sum_{H\subseteq G}\prod_{e\in E_H}P_e^+\prod_{e\in E\setminus E_H}(-P_e^-).$$
    Collecting the factors of $2Y_e$ and $-1$ gives the desired result.
\end{proof}

Next, we split $\psi_{G,H}$ into terms corresponding to the admissible tubings of $H$.

\begin{definition}
    Fix a graph $G=(V,E)$, a spanning subgraph $H$, and an admissible tubing $\mathbf{T}=\{T_1,\dots,T_n\}$ of $H$. Label the vertices $V=\{v_1,\dots,v_n\}$ according to the bijection in Lemma \ref{lemma_bijection}. To the triple $(G,H,\mathbf{T})$, we associate the function
    $$\psi_{G,H,\mathbf{T}}(\mathbf{X},\mathbf{Y})=\bigintss_{U_{\mathbf{T}}}\prod_{i=1}^n e^{X_i\eta_i}\prod_{e\in E_H}P_e^+\prod_{e\in E\setminus E_H}P_e^-\d\eta_1\cdots\d\eta_n,$$
    where now we integrate over the subset
    $$U_{\mathbf{T}}=\big\{(\eta_1,\dots,\eta_n):\eta_i<\eta_j\text{ for all }T_i\subseteq T_j\big\}\subseteq(-\infty,0)^n,$$
    instead of the entire orthant.
\end{definition}

\begin{lemma}\label{lemma_sumtubings}
    For any graph $G$ and spanning subgraph $H$,
    $$\psi_{G,H}=\sum_{\mathbf{T}\in\Adm(H)}\psi_{G,H,\mathbf{T}}.$$
\end{lemma}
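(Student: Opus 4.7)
The plan is to show that the integration domain $(-\infty,0)^n$ is partitioned, up to a set of measure zero, by the regions $U_{\mathbf{T}}$ as $\mathbf{T}$ ranges over $\Adm(H)$. Since the integrand in the definition of $\psi_{G,H}$ is identical to the one defining each $\psi_{G,H,\mathbf{T}}$, additivity of the Lebesgue integral will then yield the claim immediately, with no manipulation of the propagators $P_e^{\pm}$ required at all.

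For the covering step I would fix a point $(\eta_1,\dots,\eta_n)\in(-\infty,0)^n$ whose coordinates are pairwise distinct, let $<$ be the total order on $V$ defined by $v_i<v_j\iff\eta_i<\eta_j$, and invoke Lemma \ref{lemma_order}(a) applied to $H$ to produce the unique admissible tubing $\mathbf{T}=\{T_1,\dots,T_n\}$ of $H$ respecting $<$. Any relation $T_i\subsetneq T_j$ then forces $v_i<v_j$ by the ``respects $<$'' property, hence $\eta_i<\eta_j$, placing the point in $U_{\mathbf{T}}$. For disjointness I would observe that if such a generic point lies in $U_{\mathbf{T}}$, the defining inequalities $\eta_i<\eta_j$ (for all $T_i\subsetneq T_j$) are exactly the statement that $\mathbf{T}$ respects the total order coming from the coordinates, so the uniqueness clause of Lemma \ref{lemma_order}(a) identifies $\mathbf{T}$ as the tubing assigned to that order. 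Points with some $\eta_i=\eta_j$ lie in a finite union of hyperplanes, so they form a measure-zero set and can be ignored.

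The only real obstacle is recognizing that Lemma \ref{lemma_order}(a) already encapsulates the combinatorial content needed: it provides the surjection from total orders on $V$ onto $\Adm(H)$ together with a uniqueness clause, which converts the evident chamber decomposition of the orthant $\{\eta_{\pi(1)}<\dots<\eta_{\pi(n)}\}_\pi$ (indexed by permutations) into the coarser decomposition indexed by admissible tubings of $H$. Once that bijective dictionary is in hand, the proof reduces to a one-line invocation of additivity of the integral.
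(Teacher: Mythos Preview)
Your proposal is correct and follows essentially the same approach as the paper: both argue that the regions $U_{\mathbf{T}}$ partition $(-\infty,0)^n$ up to the measure-zero set where two coordinates coincide, by assigning to each generic point the total order on $V$ determined by its coordinates and then invoking Lemma~\ref{lemma_order}(a) to obtain a unique admissible tubing of $H$ respecting that order. The paper's proof is slightly terser about the disjointness step, but the content is identical.
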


\begin{proof}
    It suffices to show that the collection
    $$\mathcal U=\big\{U_\mathbf{T}:\mathbf{T}\in\Adm(H)\big\}$$
    partitions $(-\infty,0)^n$, up to a measure-zero set. Indeed, fix $(\eta_1,\dots,\eta_n)\in(-\infty,0)^n$. Disregarding the measure-zero set where these coordinates fail to be distinct, we can order them:
    $$\eta_{k_1}<\eta_{k_2}<\dots<\eta_{k_n}.$$
    This induces a total order on $V$, so by Lemma \ref{lemma_order} there is a unique admissible tubing $\mathbf{T}=\{T_{k_1},\dots,T_{k_n}\}$ of $H$ with the property that
    $$\Big[T_{k_i}\subseteq T_{k_j}\Big]\implies\Big[\eta_{k_i}<\eta_{k_j}\Big].$$
    But this is exactly the statement that $(\eta_1,\dots,\eta_n)\in U_\mathbf{T}$. Uniqueness implies $U_\mathbf{T}$ is the only element of $\mathcal U$ containing $(\eta_1,\dots,\eta_n)$.
\end{proof}

\begin{remark}
The preceding lemma is a symptom of the underlying physics, and motivates the name ``bulk" representation (see \cite[Section 2.2]{AH-B-P}). Viewing $\eta_1,\dots,\eta_n$ as times at which events take place (where $\eta=-\infty$ represents the beginning of time and $\eta=0$ represents the present), a natural approach is to decompose $\psi_G$ into terms which correspond to different sequences of events. The problem (and the reason we require admissible tubings) is that such a decomposition is too fine: depending on the structure of $G$, different orders may describe sequences of events which are physically indistinguishable (\ie different total orders on $V$ may induce the same admissible tubing).
\end{remark}

Physics aside, the advantage of restricting the subset we integrate over is that the integrand can now be rewritten in terms of linear forms associated to the tubes in $\mathbf T$ (cf. \cite[Equation 3.17]{AH-B-P}).

\begin{definition}\label{def_linearforms}
    Fix a graph $G=(V,E)$, where $V=\{v_1,\dots,v_n\}$. To any tube $T\in\Tub(G)$, we associate the linear form
    $$\ell_{G,T}(\mathbf{X},\mathbf{Y})=\sum_{v_i\in T}X_i+\sum_{\substack{e=\{v_i,v_j\}\in E,\\v_i\in T,\,v_j\notin T}}Y_e+\sum_{\substack{e=\{v_i,v_j\}\notin T,\\v_i,v_j\in T}}2Y_e.$$
    When there is no possibility of confusion, we will omit $G$ from the subscript and write simply $\ell_T$.
\end{definition}

\begin{remark}\label{rem_overlap}
    These forms may appear somewhat mysterious; see Example \ref{example_bulk} for a concrete case. It is helpful to note that after the change of coordinates
    $$X_i':=X_i+\sum_{e=\{v_i,v_j\}\in E}Y_e,\qquad Y_e':=-2Y_e,$$
    we can think of the linear forms as simply
    $$\ell_T=\sum_{v_i\in T}X_i'+\sum_{e\in T}Y_e'.$$
    In particular, this makes it clear that there are relationships between the $\ell_T$ arising from overlapping tubes (Figure \ref{fig_overlapexample}); these will play a large role in Section \ref{sec_canonicalform}.
\end{remark}

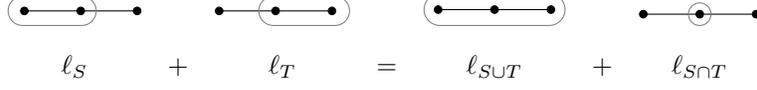
\begin{figure}
    \renewcommand{\arraystretch}{1.5}
    \begin{tabular}{ccccccc}
        \begin{tikzpicture}
            \draw[fill=black] (-0.75,0) -- (0.75,0);
            \draw[fill=black] (0,0) circle (.5mm);
            \draw[fill=black] (-0.75,0) circle (.5mm);
            \draw[fill=black] (0.75,0) circle (.5mm);
            \node [draw, color = gray, rounded rectangle, minimum height = 1.1em, minimum width = 4.0em, rounded rectangle arc length = 180] at (-0.38,0) {};
        \end{tikzpicture}
        & &
        \begin{tikzpicture}
            \draw[fill=black] (-0.75,0) -- (0.75,0);
            \draw[fill=black] (0,0) circle (.5mm);
            \draw[fill=black] (-0.75,0) circle (.5mm);
            \draw[fill=black] (0.75,0) circle (.5mm);
            \node [draw, color = gray, rounded rectangle, minimum height = 1.1em, minimum width = 4.1em, rounded rectangle arc length = 180] at (0.38,0) {};
        \end{tikzpicture}
        & &
        \begin{tikzpicture}
            \draw[fill=black] (-0.75,0) -- (0.75,0);
            \draw[fill=black] (0,0) circle (.5mm);
            \draw[fill=black] (-0.75,0) circle (.5mm);
            \draw[fill=black] (0.75,0) circle (.5mm);
            \node [draw, color = gray, rounded rectangle, minimum height = 1.2em, minimum width = 6em, rounded rectangle arc length = 180] at (0,0) {};
        \end{tikzpicture}
        & &
        \begin{tikzpicture}
            \draw[fill=black] (-0.75,0) -- (0.75,0);
            \draw[fill=black] (0,0) circle (.5mm);
            \draw[fill=black] (-0.75,0) circle (.5mm);
            \draw[fill=black] (0.75,0) circle (.5mm);
            \draw[color=gray] (0,0) circle (.15cm and .15cm);
        \end{tikzpicture}
        \\
        $\ell_S$ & $+$ & $\ell_T$ & $=$ & $\ell_{S\cup T}$ & $+$ & $\ell_{S\cap T}$
    \end{tabular}
\caption{An example of a pair of overlapping tubes $\{S,T\}$ in the $3$-site chain giving rise to a relation among the linear forms.}
\label{fig_overlapexample}
\end{figure}

As promised, let us rewrite $\psi_{G,H,\mathbf T}$ in terms of $\{\ell_T\}_{T\in\mathbf T}$.

\begin{lemma}\label{lemma_tubingcomponent}
    For any graph $G=(V,E)$, spanning subgraph $H=(V,E_H)$, and admissible tubing $\mathbf{T}=\{T_1,\dots,T_n\}$ of $H$,
    $$\psi_{G,H,\mathbf T}=\frac{1}{\prod\limits_{T\in\mathbf{T}}\ell_T}.$$
\end{lemma}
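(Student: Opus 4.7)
The plan is to evaluate $\psi_{G,H,\mathbf{T}}$ directly, exploiting the forest structure of $\mathbf{T}$ under inclusion. First, I would simplify the integrand on $U_{\mathbf{T}}$: by Lemma \ref{lemma_order}(b), the admissible tubing orients each $e=\{v_a,v_b\}\in E_H$ as $v_a\to v_b$ iff $T_a\subsetneq T_b$, and on $U_{\mathbf{T}}$ this forces $\eta_a<\eta_b$, collapsing $P_e^+$ to $e^{-Y_e(\eta_b-\eta_a)}$. Combined with the factors $e^{X_i\eta_i}$ and the $P_e^-$ factors, the integrand takes the form $\exp\!\bigl(\sum_k c_k\eta_k\bigr)$ for explicit constants $c_k=c_k(\mathbf{X},\mathbf{Y})$.

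Next, I would use the forest structure of $\mathbf{T}$ to change variables. For each non-maximal index $k$, let $p(k)$ be the unique index such that $T_{p(k)}$ is the minimal element of $\{S\in\mathbf{T}:T_k\subsetneq S\}$, and set $\alpha_k=\eta_k-\eta_{p(k)}$; for each maximal $k$ (corresponding to a connected component of $H$, by the proof of Lemma \ref{lemma_order}(a)), set $\alpha_k=\eta_k$. The Jacobian is $1$; transitivity reduces the defining inequalities of $U_{\mathbf{T}}$ to the parent/child inequalities $\alpha_k<0$, so $U_{\mathbf{T}}$ maps bijectively onto $(-\infty,0)^n$. Telescoping along the forest yields $\eta_k=\sum_{j:\,T_k\subseteq T_j}\alpha_j$, rewriting the exponent as $\sum_j C_j\alpha_j$ with $C_j:=\sum_{k:\,T_k\subseteq T_j}c_k$.

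The crux is to identify $C_j=\ell_{T_j}$. By Lemma \ref{lemma_bijection}, $T_k\subseteq T_j$ is equivalent to $v_k\in T_j$, so $C_j=\sum_{v_k\in T_j}c_k$. The $X_k$-terms give $\sum_{v_k\in T_j}X_k$. For each $e=\{v_a,v_b\}\in E_H$ with both endpoints in $T_j$, the contributions $\pm Y_e$ from the two endpoints cancel. For each $e\in E_H$ with exactly one endpoint in $T_j$, compatibility of $T_j$ with $T_a$ and $T_b$ forces the inside endpoint to be upstream (if $v_a\in T_j$ and $v_b\notin T_j$, then $T_a\subseteq T_j\subsetneq T_b$, so $v_a\to v_b$), yielding $+Y_e$. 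For each $e\in E\setminus E_H$, each endpoint in $T_j$ contributes $+Y_e$, giving $+Y_e$ or $+2Y_e$ depending on how many endpoints lie in $T_j$. Since $T_j$ is induced in $H$, the edges of the subgraph $T_j$ are precisely the $E_H$-edges with both endpoints in $T_j$, and this tally matches Definition \ref{def_linearforms}. The integral then factorizes as $\prod_j\int_{-\infty}^0 e^{\ell_{T_j}\alpha_j}\d\alpha_j=\prod_j 1/\ell_{T_j}$, using positivity of $\ell_{T_j}$ for positive $\mathbf{X},\mathbf{Y}$.

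The main obstacle is the coefficient-matching step. One must verify carefully that compatibility of tubes in $\mathbf{T}$ automatically assigns orientations so that the inside endpoint of each boundary-crossing $E_H$-edge is upstream (otherwise the signs would not cancel correctly), and one must keep track of the distinction between edges of $G$ and edges of $H$ — in particular, the $2Y_e$ terms in $\ell_{T_j}$ are accounted for precisely by the $E\setminus E_H$ edges with both endpoints in $T_j$, reflecting that $T_j$ is induced in $H$ but not necessarily in $G$.
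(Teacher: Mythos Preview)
Your proposal is correct and follows essentially the same approach as the paper: both simplify $P_e^+$ on $U_{\mathbf T}$ and then apply the change of variables $\eta_k=\sum_{j:T_k\subseteq T_j}\alpha_j$ (your $\alpha_j$ is the paper's $\eta_j'$), reducing the integral to a product over $(-\infty,0)^n$. Your write-up is in fact more explicit than the paper's on two points---you define the inverse transformation via the parent map $p(k)$ and you carry out the coefficient-matching $C_j=\ell_{T_j}$ in detail (the paper relegates this to a footnoted identity)---but the underlying argument is identical.
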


\begin{proof}
    When $(\eta_1,\dots,\eta_n)\in U_\mathbf{T}$, we can simplify
    $$P_e^+=\begin{cases}e^{Y_e(\eta_i-\eta_j)}&\text{if $T_i\subsetneq T_j$},\\e^{-Y_e(\eta_i-\eta_j)}&\text{if $T_j\subsetneq T_i$},\\1&\text{if $T_i=T_j$}\end{cases}$$
    for all $e=\{v_i,v_j\}\in E$, so the integrand of $\psi_{G,H,\mathbf T}$ becomes
    $$\prod_{i=1}^ne^{X_i\eta_i}\prod_{\substack{e\in E_H\\T_i\subsetneq T_j}}e^{Y_e(\eta_i-\eta_j)}\prod_{\substack{e\in E_H\\T_j\subsetneq T_i}}e^{-Y_e(\eta_i-\eta_j)}\prod_{e\in E\setminus E_H}e^{Y_e(\eta_i+\eta_j)}$$
    (in all subscripts, $e=\{v_i,v_j\}$). Now we make the change of coordinates
    $$\eta_i=\sum_{T_i\subseteq T_k}\eta_k',$$
    under which the integrand becomes\footnote{
    This follows from the identity
    $$\sum_{i=1}^n\Bigg[X_i'+\sum_{\substack{e\in E_H\\ T_j\subsetneq T_i}}Y_e'\Bigg]\eta_i=\sum_{k=1}^n\Bigg[\sum_{v_i\in T_k}X_i'+\sum_{e\in T_k}Y_e'\Bigg]\eta_k'.$$
    }
    $$\prod_{i=1}^ne^{X_i\eta_i}\prod_{\substack{e\in E_H\\T_i\subsetneq T_j}}e^{Y_e(\eta_i-\eta_j)}\prod_{\substack{e\in E_H\\T_j\subsetneq T_i}}e^{-Y_e(\eta_i-\eta_j)}\prod_{e\in E\setminus E_H}e^{Y_e(\eta_i+\eta_j)}=\prod_{k=1}^n e^{\ell_{T_k}\eta_k'}.$$
    To see how $U_\mathbf T$ transforms under this coordinate change, note that for $T_i\subseteq T_j$,
    $$\Big[\eta_i<\eta_j\Big]\iff\left[\sum_{T_i\subseteq T_k}\eta_k'<\sum_{T_j\subseteq T_k}\eta_k'\right]\iff\left[\sum_{T_i\subseteq T_k\subsetneq T_j}\eta_k'<0\right].$$
    Hence $\eta_i<\eta_j$ for all $T_i\subseteq T_j$ if and only if $\eta_k'<0$ for all $k$. That is to say,
    $$\Big[(\eta_1,\dots,\eta_n)\in U_{\mathbf T}\Big]\iff\Big[(\eta_1',\dots,\eta_n')\in(-\infty,0)^n\Big].$$
    Thus after the change of coordinates our integral becomes simply
    $$\psi_{G,H,\mathbf T}=\bigintss_{(-\infty,0)^n}\prod_{k=1}^n e^{\ell_{T_k}\eta_k'}\d\eta_1'\cdots\d\eta_n'=\prod_{k=1}^n\int_{-\infty}^0 e^{\ell_{T_k}\eta_k'}\d\eta_k'=\frac{1}{\prod\limits_{T\in\mathbf{T}}\ell_T},$$
    as desired.
\end{proof}

Combining Lemmas \ref{lemma_sumsubgraphs}, \ref{lemma_sumtubings}, and \ref{lemma_tubingcomponent}, we have our main result of the section.

\begin{theorem}\label{theorem_bulk}
    For any graph $G=(V,E)$,
    $$\psi_G=\frac{1}{\prod\limits_{e\in E}(2Y_e)}\sum_{H\subseteq G}\sum_{\mathbf{T}\in\Adm(H)}\frac{(-1)^{|E\setminus E_H|}}{\prod\limits_{T\in\mathbf{T}}\ell_T}.\eqno\qed$$
\end{theorem}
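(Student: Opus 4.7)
The plan is to simply chain together the three lemmas established just before the theorem statement, since each one peels off one layer of the formula. The identity to prove already has the shape of a triple sum/product: an outer factor of $\prod_{e\in E}(2Y_e)^{-1}$, a sum over spanning subgraphs $H\subseteq G$ weighted by $(-1)^{|E\setminus E_H|}$, and an inner sum over admissible tubings producing reciprocals of linear forms. Each layer corresponds to one of the lemmas, so the proof should be essentially a substitution.

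First I would apply Lemma \ref{lemma_sumsubgraphs}, which expresses
$$\psi_G=\frac{1}{\prod_{e\in E}(2Y_e)}\sum_{H\subseteq G}(-1)^{|E\setminus E_H|}\psi_{G,H}.$$
This produces the outer prefactor, the sum over spanning subgraphs $H=(V,E_H)$, and the sign $(-1)^{|E\setminus E_H|}$. Next I would substitute in Lemma \ref{lemma_sumtubings}, which writes each $\psi_{G,H}$ as a sum over admissible tubings of $H$:
$$\psi_{G,H}=\sum_{\mathbf{T}\in\Adm(H)}\psi_{G,H,\mathbf{T}}.$$
This produces the inner sum. Finally I would substitute Lemma \ref{lemma_tubingcomponent}, which identifies
$$\psi_{G,H,\mathbf{T}}=\frac{1}{\prod_{T\in\mathbf{T}}\ell_T},$$
yielding the desired reciprocal of the product of linear forms.

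Putting the three substitutions together gives exactly
$$\psi_G=\frac{1}{\prod_{e\in E}(2Y_e)}\sum_{H\subseteq G}\sum_{\mathbf{T}\in\Adm(H)}\frac{(-1)^{|E\setminus E_H|}}{\prod_{T\in\mathbf{T}}\ell_T},$$
which is the theorem. There is no real obstacle at the level of this theorem itself — the work has already been done in the three lemmas. The genuinely substantive steps were \emph{earlier}: expanding each propagator as $\tfrac{1}{2Y_e}(P_e^+ - P_e^-)$ and distributing to obtain the sum over spanning subgraphs (Lemma \ref{lemma_sumsubgraphs}); recognizing that admissible tubings of $H$ partition the integration orthant $(-\infty,0)^n$ up to measure zero via the total-order construction of Lemma \ref{lemma_order} (Lemma \ref{lemma_sumtubings}); and performing the triangular change of coordinates $\eta_i=\sum_{T_i\subseteq T_k}\eta_k'$ that decouples the integral into $n$ one-dimensional exponential integrals, one per tube (Lemma \ref{lemma_tubingcomponent}). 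So the proof of Theorem \ref{theorem_bulk} itself is a single line: combine the three lemmas.
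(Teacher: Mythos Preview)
Your proposal is correct and is exactly the paper's approach: the theorem is stated immediately after the sentence ``Combining Lemmas \ref{lemma_sumsubgraphs}, \ref{lemma_sumtubings}, and \ref{lemma_tubingcomponent}, we have our main result of the section,'' and carries a \qedsymbol\ with no further argument. Your three-step substitution is precisely that combination.
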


\begin{example}\label{example_bulk}
    Let $G$ be the $1$-loop bubble, with vertices $\{1,2\}$ and edges $\{A,B\}$. To each tube (labeled by its vertex or its edges), we associate a linear form according to Definition \ref{def_linearforms}:
    \begin{center}
    \vspace{0.1in}
        \begin{tabular}{ccc}
            \begin{tikzpicture}
                \draw (0,0) circle (0.375);
                \draw[fill=black] (-0.375,0) circle (.5mm);
                \draw[fill=black] (0.375,0) circle (.5mm);
                \draw[color=gray] (-0.375,0) circle (.15cm and .15cm);
            \end{tikzpicture}
            & &
            \begin{tikzpicture}
                \draw (0,0) circle (0.375);
                \draw[fill=black] (-0.375,0) circle (.5mm);
                \draw[fill=black] (0.375,0) circle (.5mm);
                \draw[color=gray] (0.375,0) circle (.15cm and .15cm);
            \end{tikzpicture}
            \\
            $\ell_1=X_1+Y_A+Y_B$ & \hspace{0.1in} & $\ell_2=X_2+Y_A+Y_B$
        \end{tabular}
    \end{center}
    \begin{center}
        \begin{tabular}{ccccc}
            \begin{tikzpicture}
                \draw (0,0) circle (0.375);
                \draw[fill=black] (-0.375,0) circle (.5mm);
                \draw[fill=black] (0.375,0) circle (.5mm);
                \draw[color=gray] (0:0.175) arc (0:180:0.175);
                \draw[color=gray] (0:0.575) arc (0:180:0.575);
                \draw[color=gray] (180:0.575) arc (-180:0:0.2);
                \draw[color=gray] (0:0.575) arc (0:-180:0.2);
            \end{tikzpicture}
            & &
            \begin{tikzpicture}
                \draw (0,0) circle (0.375);
                \draw[fill=black] (-0.375,0) circle (.5mm);
                \draw[fill=black] (0.375,0) circle (.5mm);
                \draw[color=gray] (0:0.175) arc (0:-180:0.175);
                \draw[color=gray] (0:0.575) arc (0:-180:0.575);
                \draw[color=gray] (180:0.575) arc (180:0:0.2);
                \draw[color=gray] (0:0.575) arc (0:180:0.2);
            \end{tikzpicture}
            & &
            \begin{tikzpicture}
                \draw (0,0) circle (0.375);
                \draw[fill=black] (-0.375,0) circle (.5mm);
                \draw[fill=black] (0.375,0) circle (.5mm);
                \draw[color=gray] (0,0) circle (0.65);
            \end{tikzpicture}
            \\
            $\ell_A=X_1+X_2+2Y_B$ & \hspace{0.1in} & $\ell_B=X_1+X_2+2Y_A$ & \hspace{0.1in} & $\ell_{AB}=X_1+X_2$
        \end{tabular}
    \vspace{0.1in}
    \end{center}
    There are four spanning subgraphs of $G$, and their admissible tubings are:
    \begin{center}
    \vspace{0.1in}
        \begin{tabular}{c|c|c|c}
            $H_1$ & $H_2$ & $H_3$ & $H_4$ \\ \hline & & & \\[-2ex]
            \begin{tikzpicture}[baseline=1]
                \draw (0,0) circle (0.375);
                \draw[fill=black] (-0.375,0) circle (.5mm);
                \draw[fill=black] (0.375,0) circle (.5mm);
                \draw[color=gray] (-0.375,0) circle (.15cm and .15cm);
                \draw[color=gray] (0,0) circle (0.65);
            \end{tikzpicture}
            &
            \begin{tikzpicture}[baseline=1]
                \draw (0:0.375) arc (0:180:0.375);
                \draw[fill=black] (-0.375,0) circle (.5mm);
                \draw[fill=black] (0.375,0) circle (.5mm);
                \draw[color=gray] (-0.375,0) circle (.15cm and .15cm);
                \draw[color=gray] (0:0.175) arc (0:180:0.175);
                \draw[color=gray] (0:0.575) arc (0:180:0.575);
                \draw[color=gray] (180:0.575) arc (-180:0:0.2);
                \draw[color=gray] (0:0.575) arc (0:-180:0.2);
            \end{tikzpicture}
            &
            \begin{tikzpicture}[baseline=1]
                \draw (0:0.375) arc (0:-180:0.375);
                \draw[fill=black] (-0.375,0) circle (.5mm);
                \draw[fill=black] (0.375,0) circle (.5mm);
                \draw[color=gray] (-0.375,0) circle (.15cm and .15cm);
                \draw[color=gray] (0:0.175) arc (0:-180:0.175);
                \draw[color=gray] (0:0.575) arc (0:-180:0.575);
                \draw[color=gray] (180:0.575) arc (180:0:0.2);
                \draw[color=gray] (0:0.575) arc (0:180:0.2);
            \end{tikzpicture}
            &
            \begin{tikzpicture}[baseline=1]
                \draw[fill=black] (-0.375,0) circle (.5mm);
                \draw[fill=black] (0.375,0) circle (.5mm);
                \draw[color=gray] (-0.375,0) circle (.15cm and .15cm);
                \draw[color=gray] (0.375,0) circle (.15cm and .15cm);
            \end{tikzpicture}
            \\[5ex]
            \begin{tikzpicture}[baseline=1]
                \draw (0,0) circle (0.375);
                \draw[fill=black] (-0.375,0) circle (.5mm);
                \draw[fill=black] (0.375,0) circle (.5mm);
                \draw[color=gray] (0.375,0) circle (.15cm and .15cm);
                \draw[color=gray] (0,0) circle (0.65);
            \end{tikzpicture}
            &
            \begin{tikzpicture}[baseline=1]
                \draw (0:0.375) arc (0:180:0.375);
                \draw[fill=black] (-0.375,0) circle (.5mm);
                \draw[fill=black] (0.375,0) circle (.5mm);
                \draw[color=gray] (0.375,0) circle (.15cm and .15cm);
                \draw[color=gray] (0:0.175) arc (0:180:0.175);
                \draw[color=gray] (0:0.575) arc (0:180:0.575);
                \draw[color=gray] (180:0.575) arc (-180:0:0.2);
                \draw[color=gray] (0:0.575) arc (0:-180:0.2);
            \end{tikzpicture}
            &
            \begin{tikzpicture}[baseline=1]
                \draw (0:0.375) arc (0:-180:0.375);
                \draw[fill=black] (-0.375,0) circle (.5mm);
                \draw[fill=black] (0.375,0) circle (.5mm);
                \draw[color=gray] (0.375,0) circle (.15cm and .15cm);
                \draw[color=gray] (0:0.175) arc (0:-180:0.175);
                \draw[color=gray] (0:0.575) arc (0:-180:0.575);
                \draw[color=gray] (180:0.575) arc (180:0:0.2);
                \draw[color=gray] (0:0.575) arc (0:180:0.2);
            \end{tikzpicture}
            &
        \end{tabular}
    \vspace{0.1in}
    \end{center}
    Hence the bulk representation of $\psi_G$ is
    $$\psi_G=\frac{1}{4Y_A Y_B}\Bigg[\underbrace{\frac{1}{\ell_1\ell_{AB}}+\frac{1}{\ell_2\ell_{AB}}}_{\psi_{G,H_1}}-\underbrace{\left(\frac{1}{\ell_1\ell_A}+\frac{1}{\ell_2\ell_A}\right)}_{\psi_{G,H_2}}-\underbrace{\left(\frac{1}{\ell_1\ell_B}+\frac{1}{\ell_2\ell_B}\right)}_{\psi_{G,H_3}}+\underbrace{\frac{1}{\ell_1\ell_2}}_{\psi_{G,H_4}}\Bigg].$$
\end{example}

\section{Boundary Representation}\label{sec_boundary}

In this section, we show that the flat space wavefunction has an expansion in terms of complete tubings. The key result is proven in \cite[Section 2.3]{AH-B-P}; we fill in the details. The strategy is to show that the flat space wavefunction and the boundary representation both satisfy the same recursion relation, which involves the graphs $G\setminus e$ obtained by deleting an edge $e$ from $G$.

\begin{lemma}\label{lemma_integralrecursion}
    Fix a graph $G=(V,E)$. We have the following recursion relations:
    \begin{enumerate}[label=(\alph*)]

        \item
        If $G$ is a disjoint union of graphs $G_1,G_2$, then
        $$\psi_G=\psi_{G_1}\psi_{G_2}.$$

        \item
        If $G$ is connected and has at least one edge, then
        $$\psi_G=\frac{1}{\ell_G}\sum_{e=\{v_i,v_j\}\in E}\Big[\psi_{G\setminus e}\Big]_{\substack{X_i=X_i+Y_e\\X_j=X_j+Y_e}}$$
        (if $e=\{v_i,v_i\}$ is a loop, we interpret the evaluation as $X_i=X_i+2Y_e$).
        
    \end{enumerate}
\end{lemma}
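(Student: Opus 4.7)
Part (a) is immediate from Fubini. When $G=G_1\sqcup G_2$ with no edge between the components, the integrand in Definition \ref{def_wavefunction} factors as a product of a function of $\{\eta_i:v_i\in V(G_1)\}$ and a function of $\{\eta_i:v_i\in V(G_2)\}$, so the integral over $(-\infty,0)^n$ factors as $\psi_{G_1}\psi_{G_2}$.

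For part (b), the plan is integration by parts in $\eta_1,\dots,\eta_n$. I would start by observing that for $T=G$ itself, Definition \ref{def_linearforms} collapses to $\ell_G=\sum_{i=1}^n X_i$: every vertex lies in $T$ and every edge is contained in $T$, so the second and third sums defining $\ell_T$ are empty. Writing $F=\prod_k e^{X_k\eta_k}$ and $P=\prod_e P_e$, using $(\sum_i X_i)F=\sum_i\partial_{\eta_i}F$ together with the product rule $\partial_{\eta_i}(FP)=(\partial_{\eta_i}F)P+F(\partial_{\eta_i}P)$ gives
\[
\ell_G\,\psi_G=\sum_{i=1}^n\bigintsss_{(-\infty,0)^n}\partial_{\eta_i}(FP)\,\d\eta_1\cdots\d\eta_n-\sum_{i=1}^n\bigintsss_{(-\infty,0)^n}F(\partial_{\eta_i}P)\,\d\eta_1\cdots\d\eta_n.
\]

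The first sum consists of boundary terms, and I would argue each vanishes. At $\eta_i=-\infty$ the factor $e^{X_i\eta_i}$ (with $X_i>0$) kills $FP$. At $\eta_i=0$ a direct computation gives $P_e(Y_e,0,\eta_j)=0$ for every $\eta_j<0$: the two surviving terms in the definition cancel, and the diagonal and loop cases are handled by the convention $\Theta(0)=\tfrac12$. Since $G$ is connected with at least one edge, every vertex $v_i$ is incident to some edge, so the whole product $P$ vanishes at $\eta_i=0$, killing the boundary term. Thus $\ell_G\psi_G=-\sum_i\int F(\partial_{\eta_i}P)\,\d\eta_1\cdots\d\eta_n$.

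The remaining task is to evaluate $\sum_i\partial_{\eta_i}P$. By the product rule this equals $\sum_{e=\{v_i,v_j\}}[(\partial_{\eta_i}+\partial_{\eta_j})P_e]\prod_{e'\ne e}P_{e'}$, and the key identity I would verify by direct differentiation is
\[
(\partial_{\eta_i}+\partial_{\eta_j})P_e=-e^{Y_e(\eta_i+\eta_j)}=-P_e^-.
\]
The point is that the $\delta$-distributions produced by differentiating the two $\Theta$-factors cancel in pairs, because their exponential coefficients agree on the diagonal $\eta_i=\eta_j$, and the remaining smooth terms simplify to $-P_e^-$. Combining the new exponential with the vertex factors converts the integrand to that of $\psi_{G\setminus e}$ with $X_i\mapsto X_i+Y_e$ and $X_j\mapsto X_j+Y_e$, establishing the recursion. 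The loop case is analogous, with the single derivative $\partial_{\eta_i}P_e=-e^{2Y_e\eta_i}$ producing the shift $X_i\mapsto X_i+2Y_e$. The main obstacle I anticipate is the distributional bookkeeping: one must be sure the $\delta$-contributions really cancel and that no boundary contribution survives on the diagonal. Once that is dispatched, the rest is routine calculus.
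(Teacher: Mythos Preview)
Your proposal is correct and follows essentially the same route as the paper: part (a) by factoring the integrand, and part (b) by integration by parts with the operator $\sum_i\partial_{\eta_i}$, using $\ell_G=\sum_i X_i$, the boundary vanishing $P_e|_{\eta_i=0}=0$ (valid because connectivity guarantees an incident edge), and the identity $(\partial_{\eta_i}+\partial_{\eta_j})P_e=-e^{Y_e(\eta_i+\eta_j)}$. The paper is terser about the distributional cancellation you flag, but the argument is the same.
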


\begin{proof}\,
    \begin{enumerate}[label=(\alph*)]

        \item
        This is clear: since there are no edges between $G_1$ and $G_2$, the integrand of $\psi_G$ is a product of the integrands of $\psi_{G_1}$ and $\psi_{G_2}$.

        \item
        Fix $k\in\{1,\dots,n\}$. We will use integration by parts one variable at a time:
        \begin{align*}
            \bigintsss_{-\infty}^0\frac{\partial}{\partial\eta_k}&\bigg[{\textstyle\prod_i}e^{X_i\eta_i}\bigg]{\textstyle\prod_e}P_e\d\eta_k\\
            &=\bigg[{\textstyle\prod_i}e^{X_i\eta_i}{\textstyle\prod_e}P_e\bigg]_{\eta_k=-\infty}^{\eta_k=0}-\bigintsss_{-\infty}^0{\textstyle\prod_i}e^{X_i\eta_i}\frac{\partial}{\partial\eta_k}\bigg[{\textstyle\prod_e}P_e\bigg]\d\eta_k.
        \end{align*}
        Since $G$ is connected and has at least one edge, there must be some edge $e=\{v_i,v_k\}\in E$. For this edge,
        $$P_e\big|_{\eta_k=0}=\frac{1}{2Y_e}\left[e^{-Y_e(\eta_i)}\Theta(\eta_i)+e^{-Y_e(-\eta_i)}\Theta(-\eta_i)-e^{Y_e(\eta_i)}\right]=0$$
        (in the region of interest, where $\eta_i<0$), and as a consequence the boundary term vanishes. The integration by parts now reads
        $$\bigintsss_{-\infty}^0\frac{\partial}{\partial\eta_k}\bigg[{\textstyle\prod_i}e^{X_i\eta_i}\bigg]{\textstyle\prod_e}P_e\d\eta_k=-\bigintsss_{-\infty}^0{\textstyle\prod_i}e^{X_i\eta_i}\frac{\partial}{\partial\eta_k}\bigg[{\textstyle\prod_e}P_e\bigg]\d\eta_k,$$
        and we can reintroduce the rest of the integrals and sum over $k$:
        \begin{align*}
            \bigintsss_{-\infty}^0\cdots\bigintsss_{-\infty}^0&\delta\bigg[{\textstyle\prod_i}e^{X_i\eta_i}\bigg]{\textstyle\prod_e}P_e\d\eta_1\cdots\d\eta_n\\
            &=-\bigintsss_{-\infty}^0\cdots\bigintsss_{-\infty}^0{\textstyle\prod_i}e^{X_i\eta_i}\delta\bigg[{\textstyle\prod_e}P_e\bigg]\d\eta_1\cdots\d\eta_n,
        \end{align*}
        where $\delta$ is the differential operator $\frac{\partial}{\partial\eta_1}+\cdots+\frac{\partial}{\partial\eta_n}$. Finally, we compute
        \begin{align*}
            \delta\bigg[{\textstyle\prod_i}e^{X_i\eta_i}\bigg]{\textstyle\prod_e}P_e&=(X_1+\dots+X_n){\textstyle\prod_i}e^{X_i\eta_i}{\textstyle\prod_e}P_e\\
            &=\ell_G\bigg[{\textstyle\prod_i}e^{X_i\eta_i}{\textstyle\prod_e}P_e\bigg]
        \end{align*}
        (since $G$ is connected), and
        \begin{align*}
            {\textstyle\prod_i}e^{X_i\eta_i}\delta\bigg[{\textstyle\prod_e}P_e\bigg]&={\textstyle\prod_i}e^{X_i\eta_i}\sum_{e=\{v_j,v_k\}\in E}\left[\delta(P_e)\prod_{e'\in E\setminus\{e\}}P_{e'}\right]\\
            &=\sum_{e=\{v_j,v_k\}\in E}\left[{\textstyle\prod_i}e^{X_i\eta_i}\bigg[-e^{Y_e(\eta_j+\eta_k)}\bigg]\prod_{e'\in E\setminus\{e\}}P_{e'}\right]\\
            &=-\sum_{e=\{v_j,v_k\}\in E}\left[{\textstyle\prod_i}e^{X_i\eta_i}\prod_{e'\in E\setminus\{e\}}P_{e'}\right]_{\substack{X_j=X_j+Y_e\\X_k=X_k+Y_e}}.
        \end{align*}
        Substituting these into the integrals above, we recognize the left hand side as $\ell_G\psi_G$ and the right hand side as
        $$\sum_{e=\{v_j,v_k\}\in E}\Big[\psi_{G\setminus e}\Big]_{\substack{X_j=X_j+Y_e\\X_k=X_k+Y_e}}.$$
        Dividing both sides by $\ell_G$ gives the desired result.\qedhere

    \end{enumerate}
\end{proof}

\begin{lemma}\label{lemma_completerecursion}
    Fix a graph $G=(V,E)$. We have the following recursion relations:
    \begin{enumerate}[label=(\alph*)]

        \item
        If $G$ is a disjoint union of graphs $G_1,G_2$, then
        $$\sum_{\mathbf T\in\Com(G)}\frac{1}{\prod\limits_{T\in\mathbf T}\ell_{G,T}}=\left[\sum_{\mathbf T_1\in\Com(G_1)}\frac{1}{\prod\limits_{T\in\mathbf T_1}\ell_{G_1,T}}\right]\left[\sum_{\mathbf T_2\in\Com(G_2)}\frac{1}{\prod\limits_{T\in\mathbf T_2}\ell_{G_2,T}}\right]$$

        \item
        If $G$ is connected and has at least one edge, then
        $$\sum_{\mathbf T\in\Com(G)}\frac{1}{\prod\limits_{T\in\mathbf T}\ell_{G,T}}=\frac{1}{\ell_{G,G}}\sum_{e=\{v_i,v_j\}\in E}\left[\sum_{\mathbf T_e\in\Com(G\setminus e)}\frac{1}{\prod\limits_{T\in\mathbf T_e}\ell_{G\setminus e,T}}\right]_{\substack{X_i=X_i+Y_e\\X_j=X_j+Y_e}}$$
        (if $e=\{v_i,v_i\}$ is a loop, we interpret the evaluation as $X_i=X_i+2Y_e$).

    \end{enumerate}
\end{lemma}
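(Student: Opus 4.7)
The plan is to establish both recursions via direct bijections between the relevant tubings, combined with a calculation of how the linear forms $\ell_T$ transform under the corresponding operations on $G$.

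Part (a) rests on the natural splitting $\mathbf{T} = \mathbf{T}_1 \sqcup \mathbf{T}_2$ of any complete tubing of $G_1 \sqcup G_2$, since every tube is connected and hence lies inside a single component. The reverse assignment $(\mathbf{T}_1,\mathbf{T}_2) \mapsto \mathbf{T}_1 \cup \mathbf{T}_2$ produces a non-overlapping collection of size $|V|+|E|$, which by Corollary \ref{cor_bijection} forces maximality. For any tube $T \subseteq G_i$, the absence of edges between $G_1$ and $G_2$ gives $\ell_{G,T} = \ell_{G_i,T}$, so the sum over $\Com(G)$ factors as the product of the two sums on the right.

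For part (b), I exploit that every complete tubing of a connected $G$ contains $G$ itself: the tube $G$ is nested with every other tube, so maximality forces $G \in \mathbf{T}$. Under Corollary \ref{cor_bijection}, this tube corresponds to a unique edge $e \in E$, and this partitions $\Com(G)$ according to $e$. For fixed $e = \{v_i,v_j\}$, the map $\mathbf{T} \mapsto \mathbf{T} \setminus \{G\}$ is a bijection onto $\Com(G \setminus e)$: by the defining property of the bijection, no tube in $\mathbf{T} \setminus \{G\}$ uses the edge $e$, so each is a connected subgraph of $G \setminus e$; the resulting collection has size $|V|+|E|-1$, which by Corollary \ref{cor_bijection} applied to $G \setminus e$ forces completeness. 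The inverse $\mathbf{T}_e \mapsto \mathbf{T}_e \cup \{G\}$ is checked similarly. The prefactor $1/\ell_{G,G} = 1/(X_1+\cdots+X_n)$ emerges by pulling out the term associated to the tube $G$.

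The remaining step is to identify $\ell_{G,T}$ with the substituted form of $\ell_{G\setminus e,T}$ for each $T \in \mathbf{T} \setminus \{G\}$. A case analysis on how the endpoints of $e$ meet $V(T)$ shows that the edge $e$ contributes $2Y_e$, $Y_e$, or $0$ to $\ell_{G,T}$ according as both, exactly one, or neither of $v_i, v_j$ lies in $V(T)$; in each case the discrepancy is exactly what the substitution $X_i \mapsto X_i + Y_e$, $X_j \mapsto X_j + Y_e$ adds to $\ell_{G\setminus e,T}$. The loop case $v_i = v_j$, in which only the contributions $2Y_e$ and $0$ arise, is matched by the substitution $X_i \mapsto X_i + 2Y_e$ as stated. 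Since this substitution commutes with products, sums, and reciprocals, applying it to the inner sum over $\Com(G \setminus e)$ and then summing over $e$ reproduces the full sum over $\Com(G)$. The main obstacle is verifying the case analysis carefully in the presence of loops and multi-edges; once that is laid out, the identification is term by term.
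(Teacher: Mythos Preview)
Your proposal is correct and follows essentially the same route as the paper: in both parts you use the natural bijection on complete tubings (splitting over components for (a), and removing the tube $G$ together with the unique edge furnished by Corollary~\ref{cor_bijection} for (b)), then match the linear forms via $\ell_{G,T}=\ell_{G_i,T}$ in (a) and via the substitution identity $\ell_{G,T}=[\ell_{G\setminus e,T}]_{X_i\mapsto X_i+Y_e,\;X_j\mapsto X_j+Y_e}$ in (b). Your write-up is somewhat more explicit than the paper's (the size-forces-maximality argument via Corollary~\ref{cor_bijection} and the endpoint case analysis for the linear forms), but the underlying argument is the same.
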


\begin{proof}\,
    \begin{enumerate}[label=(\alph*)]

        \item
        First, in order to see that
        $$\sum_{\mathbf T\in\Com(G)}\frac{1}{\prod\limits_{T\in\mathbf T}\ell_{G,T}}=\left[\sum_{\mathbf T_1\in\Com(G_1)}\frac{1}{\prod\limits_{T\in\mathbf T_1}\ell_{G,T}}\right]\left[\sum_{\mathbf T_2\in\Com(G_2)}\frac{1}{\prod\limits_{T\in\mathbf T_2}\ell_{G,T}}\right],$$
        note that $\mathbf{T}$ is a complete tubing of $G$ if and only if it is a disjoint union of a complete tubing $\mathbf{T}_1$ of $G_1$ and a complete tubing $\mathbf{T}_2$ of $G_2$.
        
        Now, the desired equality follows from the fact that
        $$\ell_{G,T}=\ell_{G_1,T}$$
        for all $T\in\Tub(G_1)$, and likewise for $G_2$.

        \item
        First, in order to see that
        $$\sum_{\mathbf T\in\Com(G)}\frac{1}{\prod\limits_{T\in\mathbf T}\ell_{G,T}}=\frac{1}{\ell_{G,G}}\sum_{e\in E}\left[\sum_{\mathbf T_e\in\Com(G\setminus e)}\frac{1}{\prod\limits_{T\in\mathbf T_e}\ell_{G,T}}\right],$$
        note that $\mathbf{T}$ is a complete tubing of $G$ if and only if it is a disjoint union of $\{G\}$ and a complete tubing $\mathbf{T}_e$ of $G\setminus e$, for some unique $e\in E$ (since $G$ is connected, any complete tubing $\mathbf{T}$ of $G$ must contain the tube $G$, hence by Corollary \ref{cor_bijection} there is a unique edge $e$ outside every other tube).

        Now, the desired equality follows from the fact that for $e=\{v_i,v_j\}\in E$,
        $$\ell_{G,T}=\Big[\ell_{G\setminus e,T}\Big]_{\substack{X_i=X_i+Y_e\\X_j=X_j+Y_e}}$$
        for all $T\in\Tub(G\setminus e)$.\qedhere

    \end{enumerate}
\end{proof}

\begin{theorem}\label{theorem_boundary}
    For any graph $G$,
    $$\psi_G=\sum_{\mathbf{T}\in\Com(G)}\frac{1}{\prod\limits_{T\in\mathbf{T}}\ell_T}.$$
\end{theorem}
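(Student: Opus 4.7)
The strategy is to induct on $|V|+|E|$, exploiting the fact that Lemmas \ref{lemma_integralrecursion} and \ref{lemma_completerecursion} furnish \emph{identical} recursion relations for $\psi_G$ and for the candidate right-hand side $\phi_G := \sum_{\mathbf{T}\in\Com(G)}\prod_{T\in\mathbf{T}}\ell_T^{-1}$. Once a base case is established, these parallel recursions transport the equality $\psi_H = \phi_H$ from strictly smaller graphs $H$ up to $G$ itself, so essentially all the work has already been done in the preceding two lemmas.

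For the base case $|V|+|E|=1$, the graph $G$ is a single isolated vertex $v_1$, and one computes directly that $\psi_G = \int_{-\infty}^0 e^{X_1\eta_1}\,d\eta_1 = X_1^{-1}$, while the unique complete tubing of $G$ is $\{\{v_1\}\}$ with $\ell_{\{v_1\}} = X_1$, so $\phi_G = X_1^{-1}$ as well.

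For the inductive step with $|V|+|E|\geq 2$, I split on connectedness. If $G$ is disconnected, write $G = G_1 \sqcup G_2$ with both components nonempty; each $G_i$ then has strictly smaller $|V_i|+|E_i|$, so by the inductive hypothesis combined with part (a) of each lemma, $\psi_G = \psi_{G_1}\psi_{G_2} = \phi_{G_1}\phi_{G_2} = \phi_G$. If instead $G$ is connected, then $|V|+|E|\geq 2$ forces $|E|\geq 1$, and part (b) of each lemma expresses $\psi_G$ and $\phi_G$ as the same operator $\frac{1}{\ell_G}\sum_{e=\{v_i,v_j\}}[\,\cdot\,]_{X_i=X_i+Y_e,\,X_j=X_j+Y_e}$ applied to $\psi_{G\setminus e}$ and $\phi_{G\setminus e}$ respectively. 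Since $G\setminus e$ has $|V|+|E|-1$ in this complexity, the inductive hypothesis gives $\psi_{G\setminus e}=\phi_{G\setminus e}$ as an identity of rational functions, which is preserved under the substitution, so summing over $e$ yields $\psi_G = \phi_G$.

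The only point requiring a moment's thought is the interaction between the inductive hypothesis and the variable substitution in part (b): since the equality $\psi_{G\setminus e}=\phi_{G\setminus e}$ holds as an identity of rational functions in $(\mathbf{X},\mathbf{Y})$, the substitution $X_i\mapsto X_i+Y_e$, $X_j\mapsto X_j+Y_e$ (or $X_i\mapsto X_i+2Y_e$ in the loop case) transports the identity formally. Beyond this, no genuine obstacle arises—the combinatorial and analytic content has already been isolated in Lemmas \ref{lemma_integralrecursion} and \ref{lemma_completerecursion}.
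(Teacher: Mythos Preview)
Your proof is correct and takes essentially the same approach as the paper: both invoke Lemmas \ref{lemma_integralrecursion} and \ref{lemma_completerecursion} to show that $\psi_G$ and $\phi_G$ satisfy identical recursions, then verify the base case of a single vertex by direct computation. You have simply made the induction on $|V|+|E|$ and the disconnected case explicit, whereas the paper leaves these implicit in the phrase ``satisfy the same recursion relations.''
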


\begin{proof}
    Lemmas \ref{lemma_integralrecursion} and \ref{lemma_completerecursion} establish that both expressions satisfy the same recursion relations. We need only to check the base case, where $G$ is a connected graph with no edges, \ie the graph with a single vertex $v_1$. In this case, there is only one complete tubing $\mathbf{T}=\{G\}$, and
    $$\psi_G=\int_{-\infty}^0 e^{X_1\eta_1}\d\eta_1=\frac{1}{X_1}=\frac{1}{\ell_G},$$
    as desired.
\end{proof}

\begin{example}\label{example_boundary}
    Let $G$ be the $1$-loop bubble from Example \ref{example_bulk}. There are two complete tubings of $G$:
    \begin{center}
        \begin{tabular}{ccc}
            \begin{tikzpicture}
                \draw (0,0) circle (0.375);
                \draw[fill=black] (-0.375,0) circle (.5mm);
                \draw[fill=black] (0.375,0) circle (.5mm);
                \draw[color=gray] (-0.375,0) circle (.15cm and .15cm);
                \draw[color=gray] (0.375,0) circle (.15cm and .15cm);
                \draw[color=gray] (0,0) circle (0.65);
                \draw[color=gray] (0:0.175) arc (0:180:0.175);
                \draw[color=gray] (0:0.575) arc (0:180:0.575);
                \draw[color=gray] (180:0.575) arc (-180:0:0.2);
                \draw[color=gray] (0:0.575) arc (0:-180:0.2);
            \end{tikzpicture}
            & \hspace{0.2in} &
            \begin{tikzpicture}
                \draw (0,0) circle (0.375);
                \draw[fill=black] (-0.375,0) circle (.5mm);
                \draw[fill=black] (0.375,0) circle (.5mm);
                \draw[color=gray] (-0.375,0) circle (.15cm and .15cm);
                \draw[color=gray] (0.375,0) circle (.15cm and .15cm);
                \draw[color=gray] (0,0) circle (0.65);
                \draw[color=gray] (0:0.175) arc (0:-180:0.175);
                \draw[color=gray] (0:0.575) arc (0:-180:0.575);
                \draw[color=gray] (180:0.575) arc (180:0:0.2);
                \draw[color=gray] (0:0.575) arc (0:180:0.2);
            \end{tikzpicture}
        \end{tabular}
    \end{center}
    Hence the boundary representation of $\psi_G$ is
    $$\psi_G=\frac{1}{\ell_1\ell_2\ell_A\ell_{AB}}+\frac{1}{\ell_1\ell_2\ell_B\ell_{AB}}.$$
\end{example}

\section{Canonical Form Representation}\label{sec_canonicalform}

In this section, we show that the flat space wavefunction can be read off from the canonical form of the cosmological polytope. To be precise, we will show that the canonical form representation agrees with the boundary representation. Let us introduce the relevant players.

\begin{proposition}[see \cite{AH-B-P,K-M,AH-B-L}]\label{prop_polytope}
    To any graph $G=(V,E)$, there is an associated:
    \begin{enumerate}[label=-]

        \item
        \emph{cosmological polytope} $P_G$, a polytope in $\R^{|V|+|E|}$ (with coordinate variables $\{X_i\}$, $\{Y_e\}$) with facet hyperplanes $\{\ell_T=0\}_{T\in\Tub(G)}$,
    
        \item
        \emph{dual cosmological polytope} $P_G^\lor$, a polytope with vertices corresponding to the facets of $P_G$,
        
        \item
        \emph{canonical form} $\Omega_G$, a top-dimensional differential form which has logarithmic singularities on the facets of $P_G$ and is regular everywhere else.
        \hfill\Diamond
        
    \end{enumerate}
\end{proposition}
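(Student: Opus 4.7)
The plan is to assemble Proposition \ref{prop_polytope} from three established results in the literature rather than to give a self-contained proof. Since the statement merely collects the objects whose properties will be used in the rest of the section, a proof amounts to (i) giving a concrete construction of each object and (ii) invoking known structural results for the non-obvious assertions.

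First, I would define the cosmological polytope explicitly following Arkani-Hamed, Benincasa, and Postnikov. In the standard basis of $\R^{|V|+|E|}$ with coordinates $\{X_i\}_{v_i\in V}$ and $\{Y_e\}_{e\in E}$, for each edge $e=\{v_i,v_j\}\in E$ one takes the three points $\mathbf{x}_i+\mathbf{y}_e$, $\mathbf{x}_j+\mathbf{y}_e$, $\mathbf{x}_i+\mathbf{x}_j-\mathbf{y}_e$ (with the loop case handled analogously), and sets $P_G$ to be their convex hull. The statement that the facet hyperplanes are precisely $\{\ell_T=0\}_{T\in\Tub(G)}$ is then a matter of verifying, for each tube $T$, that $\ell_T$ is nonnegative on the prescribed vertex set, vanishes on a full-dimensional subset, and that every facet arises this way. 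This is the core structural theorem of \cite{AH-B-P}; I would simply cite it, noting that the correct pairing with Definition \ref{def_linearforms} is what justifies the notation.

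Second, for the dual polytope $P_G^\lor$, I would appeal to standard polar duality: after translating so that the (relative) interior of $P_G$ contains the origin, the polar dual is a well-defined polytope whose vertices correspond bijectively to the facet-defining hyperplanes of $P_G$ via the map sending each supporting functional to its (rescaled) normal vector. The combinatorial duality between faces of $P_G$ and faces of $P_G^\lor$ is classical; the vertex labeling by tubes is inherited from the facet labeling established in the first step.

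Third, for the canonical form $\Omega_G$, I would invoke the general theory of positive geometries developed in \cite{AH-B-L}: every convex polytope is a positive geometry, and as such carries a unique top-dimensional meromorphic form with simple poles along its boundary strata and no other singularities, specified up to normalization by residues along the facets. Applying this general existence and uniqueness result to $P_G$ yields $\Omega_G$.

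The main obstacle, or rather the only nontrivial input, is the facet description asserted in the first bullet: one must know both that every $\ell_T$ with $T$ a tube actually cuts out a facet and that no further facets appear. This is where the combinatorics of $G$ enters the convex geometry, and I would lean on \cite{AH-B-P} for it. The remaining two bullets are then formal consequences of polar duality and the positive geometry framework, respectively.
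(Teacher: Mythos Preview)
Your proposal is correct and mirrors exactly what the paper does: the proposition is not proved in the paper at all but is stated as a recollection from the literature (note the \Diamond\ marker), with the surrounding text citing \cite{AH-B-P} for the construction, \cite{K-M} for the precise facet description, and \cite{AH-B-L} for the canonical form of a positive geometry. The only minor discrepancy is that you attribute the facet-hyperplane characterization primarily to \cite{AH-B-P}, whereas the paper points to \cite{K-M} for the precise description of the faces; otherwise your outline is, if anything, more detailed than the paper's own treatment.
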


The cosmological polytope was introduced in \cite{AH-B-P}; a precise description of its faces is given in \cite{K-M}. For an overview on canonical forms, which in general are associated to a \emph{positive geometry} (of which polytopes are an example), see \cite{AH-B-L}. We give only a brief review of the relevant results.

Consider a polytope $P$ with facet hyperplanes $\{F_i=0\}_{i=1}^m$. Moreover, suppose each $F_i$ has integer coefficients (as in the case of $P_G$).

\begin{proposition}[{\cite[Theorem 3.2]{Gaetz}}]\label{prop_canonicalform}
    The canonical form of $P$ is
    $$\Omega_P=\frac{\adj_{P^\lor}(\mathbf x)}{\prod_i F_i}\d\mathbf{x}.$$
    where $\adj_{P^\lor}(\mathbf x)$ is the adjoint polynomial of the dual polytope $P^\lor$.
    \qed
\end{proposition}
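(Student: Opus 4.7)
My plan is to identify both sides via the defining characterization of the canonical form. The canonical form $\Omega_P$ is the unique top-degree form on the ambient space that has only simple poles along the facet hyperplanes $\{F_i=0\}$, is regular elsewhere on the projective completion, and satisfies the residue recursion: the residue of $\Omega_P$ along each $\{F_i=0\}$ equals the canonical form of the corresponding facet $F_i\cap P$. The pole structure immediately forces
\[
\Omega_P=\frac{N(\mathbf{x})}{\prod_{i=1}^m F_i(\mathbf{x})}\,\d\mathbf{x}
\]
for some polynomial $N$. Matching homogeneity so that $\Omega_P$ is well-defined on the projectivized ambient space forces $\deg N=m-d-1$, where $m$ is the number of facets and $d=\dim P$.

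The second ingredient I would invoke is the characterization of the adjoint polynomial due to Warren and Kohn--Ranestad--Sturmfels: $\adj_{P^\lor}$ is, up to scalar, the unique polynomial of degree $m-d-1$ that vanishes on the \emph{residual arrangement} $\mathcal{R}(P)$, the union of the linear subspaces $\bigcap_{i\in S}\{F_i=0\}$ as $S$ ranges over collections of facet hyperplanes whose common intersection does not cut out a face of $P$. So it suffices to show the numerator $N$ above has the correct degree (already noted) and vanishes on $\mathcal{R}(P)$; uniqueness then pins down $N=c\cdot\adj_{P^\lor}$ for a scalar $c$, which is fixed by matching one residue.

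I would prove the vanishing by induction on $d$. The base case $d=1$ is a direct computation on an interval, where $\mathcal{R}(P)$ is empty and $N$ is constant. For the inductive step, apply the residue recursion to each facet $\{F_i=0\}$: the resulting form is $\Omega_{F_i\cap P}$, which by the inductive hypothesis has numerator proportional to $\adj_{(F_i\cap P)^\lor}$, vanishing on $\mathcal{R}(F_i\cap P)$. A combinatorial lemma identifying the trace of $\mathcal{R}(P)$ on $\{F_i=0\}$ with $\mathcal{R}(F_i\cap P)$ (together with the loci where additional $F_j$'s vanish on $\{F_i=0\}$) then forces $N|_{F_i=0}$ to vanish on $\mathcal{R}(P)\cap\{F_i=0\}$. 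Since $\mathcal{R}(P)$ is the union of these pieces as $i$ varies over facets, $N$ vanishes on all of $\mathcal{R}(P)$, as required.

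The main obstacle will be the combinatorial lemma relating the residual arrangements of $P$ and of its facets: one must track carefully which collections of facet hyperplanes cut out faces of $F_i\cap P$ versus merely ``pseudo-faces'' sitting on the facet, and verify that the induction closes cleanly. The remaining steps — writing $\Omega_P$ as a rational form with a polynomial numerator, counting degrees, and matching a single overall scalar — are essentially routine once the residue/vanishing bookkeeping is set up.
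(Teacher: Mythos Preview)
The paper does not prove this proposition: it is quoted from \cite{Gaetz} and closed with a \qed, so there is no argument to compare against. Your sketch is therefore not a reconstruction of the paper's proof but an independent attempt at the cited result.

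As an outline it is broadly reasonable, but there is one genuine gap. You rely on the statement that $\adj_{P^\lor}$ is, up to scalar, the \emph{unique} polynomial of degree $m-d-1$ vanishing on the residual arrangement $\mathcal{R}(P)$. That uniqueness theorem (Kohn--Ranestad) is known under genericity hypotheses---typically that $P$ is simple, or more precisely that the hyperplane arrangement is ``free of unexpected sections'' in the relevant degree---and it can fail for arbitrary polytopes. The proposition here is stated for a general $P$ with integral facet hyperplanes (and in the paper's application, cosmological polytopes are not simple in general), so you cannot invoke uniqueness-via-residual-arrangement without further justification. Your induction would then stall: showing $N$ vanishes on $\mathcal{R}(P)$ no longer pins $N$ down.

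A safer route, and the one taken in the references the paper leans on, is to bypass the residual arrangement entirely: use a triangulation of $P$ (or of $P^\lor$), write $\Omega_P$ as the sum of canonical forms of the simplices, and then recognize that sum as exactly Warren's triangulation formula for $\adj_{P^\lor}/\prod_i F_i$. This uses only additivity of canonical forms under subdivision and the explicit simplex case, and it works without any simplicity or genericity assumption.
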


The adjoint polynomial was first defined for polytopes by Warren \cite{Warren}.

\begin{definition}\label{def_adjointpolynomial}
    Fix a triangulation $\{\sigma_1,\dots,\sigma_r\}$ of the dual polytope $P^\lor$. The \emph{adjoint polynomial} of $P^\lor$ is\footnote{
    It may seem strange to define $\adj_{P^\lor}$ rather than $\adj_P$, but in some sense it is more natural. For example, the vanishing locus of $\adj_{P^\lor}$ is the \emph{adjoint hypersurface} of $P$ (not of $P^\lor$!).
    }
    $$\adj_{P^\lor}(\mathbf x)=\sum_{j=1}^r\vol(\sigma_j)\prod_{F_i\notin\sigma_j}F_i(\mathbf x),$$
    where $\vol(\sigma_j)$ is the normalized volume of $\sigma_j$ (normalized so that the unit simplex has volume 1) and by ``$F_i\notin\sigma_j$" we mean ``the vertex of $P^\lor$ corresponding to $F_i$ is not in $\sigma_j$".
\end{definition}

Though the adjoint polynomial appears to depend on a choice of triangulation, in \cite{Warren} Warren shows that it is independent of this choice. However, note that the adjoint polynomial is sensitive to rescalings of the $F_i$, and therefore is only defined up to a constant. Different normalizations appear in the literature; this is of no concern to us because the ratio
$$\frac{\adj_{P^\lor}}{\prod_i F_i}$$
appearing in Proposition \ref{prop_canonicalform} and Theorem \ref{theorem_canonicalform} is well-defined regardless.

The difficulty of computing the adjoint polynomial via Definition \ref{def_adjointpolynomial} is that we require a triangulation of $P^\lor$. Fortunately, there is a correspondence between the regular triangulations of a polytope and the initial ideals of its toric ideal:

\begin{definition}
    The \emph{toric ideal} of $P^\lor$ is
    $$I_{P^\lor}=\big(\mathbf t^{\mathbf u}-\mathbf t^{\mathbf v}:{\textstyle\sum_i u_i F_i=\sum_i v_i F_i}\big)\subseteq k[t_1,\dots,t_m],$$
    where we have used the multi-index notation $\mathbf{t}^{\mathbf u}:=t_1^{u_1}\cdots t_m^{u_m}$.
\end{definition}

To be precise, $I_{P^\lor}$ is the toric ideal defined by the coefficient vectors $\mathbf f_i\in\mathbb Z^d$ of $F_i$ (\ie $F_i=f_{i1}x_1+\dots+f_{id}x_d$). Moreover, there is a natural multigrading on $k[t_1,\dots,t_m]$ defined by $\deg(t_i)=\mathbf f_i$, which we will make use of.

\begin{proposition}[{\cite[Corollary 8.4 and Theorem 8.8]{Sturmfels}}]\label{prop_triangulationcorrespondence}
    Fix a monomial order $<$ on $k[t_1,\dots,t_m]$. If the minimal associated primes $\{\p_{K_1},\dots,\p_{K_r}\}$ of $\init_<(I_{P^\lor})$ are
    $$\p_{K_j}=(t_k:k\in K_j),\qquad K_j\subseteq\{1,\dots,m\},$$
    then $\{\sigma_1,\dots,\sigma_r\}$ is a regular triangulation of $P^\lor$, where $\sigma_j$ is the simplex with vertices corresponding to $\{F_k:k\notin K_j\}$. Moreover, the normalized volume of $\sigma_j$ is the multiplicity of $I_{P^\lor}$ at $\p_{K_j}$.
    \qed
\end{proposition}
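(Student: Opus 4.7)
The plan is to invoke the theory of Gr\"obner degenerations of toric ideals. First, replace the monomial order $<$ by a sufficiently generic weight vector $\omega\in\R_{>0}^m$ so that $\init_<(I_{P^\lor})=\init_\omega(I_{P^\lor})$; such $\omega$ exists by lifting the leading terms of a reduced Gr\"obner basis. Writing $A=\{\mathbf f_1,\dots,\mathbf f_m\}$ for the vertex coordinates of $P^\lor$, the lift $\{(\mathbf f_i,\omega_i)\}$ followed by the projection of its lower convex envelope produces a regular polyhedral subdivision $\Delta_\omega$ of $P^\lor$; for generic $\omega$ this subdivision is a triangulation, and it will turn out to be the triangulation $\{\sigma_1,\dots,\sigma_r\}$ in the statement.

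The first main step is to identify $\sqrt{\init_\omega(I_{P^\lor})}$ with the Stanley--Reisner ideal $I_{\Delta_\omega}$; this is the content of \cite[Corollary 8.4]{Sturmfels}. The key input is that for any subset $J\subseteq\{1,\dots,m\}$ which is not contained in a face of $\Delta_\omega$, the lifted configuration forces an explicit binomial in $I_{P^\lor}$ whose $\omega$-leading monomial is supported on $J$, so some power of $\prod_{j\in J}t_j$ lies in $\init_\omega(I_{P^\lor})$; conversely, if $J$ is a face of $\Delta_\omega$, one checks by a Hilbert-function count that $\prod_{j\in J}t_j$ remains a standard monomial modulo $\init_\omega(I_{P^\lor})$. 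Once this identification is in hand, the minimal primes of $\init_\omega(I_{P^\lor})$ have the same radicals as those of $I_{\Delta_\omega}$, namely $(t_k:k\in K_j)$ as $K_j$ ranges over the complements of the maximal faces, exactly matching the form stated for $\p_{K_j}$ and exhibiting each $\sigma_j$ as a simplex of $\Delta_\omega$.

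For the volume statement \cite[Theorem 8.8]{Sturmfels}, the tool is additivity of multidegrees under Gr\"obner degeneration. Endow $k[t_1,\dots,t_m]$ with the multigrading $\deg(t_i)=\mathbf f_i$; then $I_{P^\lor}$ is homogeneous, and its multigraded Hilbert series coincides with that of $\init_\omega(I_{P^\lor})$, which in turn decomposes via primary decomposition as
\[
\mathrm{mdeg}(k[t]/I_{P^\lor})=\sum_j\mathrm{mult}_{\p_{K_j}}\!\bigl(\init_\omega(I_{P^\lor})\bigr)\cdot\mathrm{mdeg}(k[t]/\p_{K_j}).
\]
The left-hand side recovers the normalized volume polynomial of $P^\lor$ via the toric degree formula, and by additivity over the triangulation $\Delta_\omega$ it splits as $\sum_j\vol(\sigma_j)\cdot(\text{simplex contribution of }\sigma_j)$. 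Since $k[t]/\p_{K_j}$ is the coordinate ring attached to $\{\mathbf f_k:k\notin K_j\}$, its multidegree contribution depends only on this vertex set; matching coefficients in the $\mathbb Z^d$-grading forces $\mathrm{mult}_{\p_{K_j}}\!\bigl(\init_\omega(I_{P^\lor})\bigr)=\vol(\sigma_j)$.

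The main obstacle is the Stanley--Reisner identification in the first step: both inclusions rest on a careful interplay between the reduced Gr\"obner basis of $I_{P^\lor}$ and the circuits of the point configuration $A$, and this is the technical heart of Chapter 8 of \cite{Sturmfels}. A self-contained derivation would essentially reproduce that chapter, so in practice the proposition is cited rather than reproven; the sketch above is only meant to indicate the conceptual path by which both claims issue from the single regular subdivision associated to $\omega$.
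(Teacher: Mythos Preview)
The paper does not prove this proposition at all: it is stated with a citation to \cite[Corollary 8.4 and Theorem 8.8]{Sturmfels} and closed with a \qed, so there is no ``paper's own proof'' to compare against. Your sketch is a reasonable outline of the argument in Sturmfels' Chapter~8 (weight-vector replacement, identification of $\sqrt{\init_\omega(I_{P^\lor})}$ with the Stanley--Reisner ideal of the regular subdivision, and the Hilbert-series/multiplicity argument for volumes), and you yourself correctly note in the final paragraph that in practice this result is cited rather than reproven---which is exactly what the paper does.
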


The triangulation we obtain depends on the choice of monomial order. Returning to the setting of the cosmological polytope (recall $P_G$ is the polytope with facet hyperplanes $\{\ell_T=0\}$), the toric ideal of $P_G^\lor$ is
$$I_G=\big(\mathbf t^{\mathbf u}-\mathbf t^{\mathbf v}:{\textstyle\sum_T u_T\ell_T=\sum_T v_T\ell_T}\big)\subseteq k\big[t_T:T\in\Tub(G)\big],$$
and it will be convenient to use the reverse lexicographic\footnote{Here we must use the natural multigrading on $k[t_T:T\in\Tub(G)]$, not the standard grading.} order $<$ induced by any total order on $\{t_T\}$ satisfying
$$t_T<t_{T'}\text{ for all }T\supsetneq T'.$$
We will see that with this order, the minimal associated primes of $\init_<(I_G)$ exactly correspond to complete tubings of $G$.

\begin{lemma}\label{lemma_init}
    For any graph $G$,
    $$\init_<(I_G)=\big(t_S t_T:\textnormal{$S,T\in\Tub(G)$ overlapping}\big),$$
    where $<$ is the order described in the previous paragraph.
\end{lemma}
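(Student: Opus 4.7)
The plan is to verify the two inclusions of Lemma~\ref{lemma_init} separately.

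For the inclusion $(t_St_T : S,T \text{ overlapping}) \subseteq \init_<(I_G)$, I would fix an overlapping pair $S, T$ and produce an explicit binomial in $I_G$ with leading term $t_St_T$. The union $S \cup T$ is connected (since $S \cap T \neq \emptyset$), hence a tube. Letting $C_1, \ldots, C_k$ denote the connected components of $S \cap T$, and working in the primed coordinates of Remark~\ref{rem_overlap}, the identity
$$\ell_S + \ell_T = \ell_{S\cup T} + \sum_{j=1}^k \ell_{C_j}$$
is immediate: each $X'_v$ and each $Y'_e$ appears with the same total coefficient on both sides. Hence $t_St_T - t_{S\cup T}\prod_j t_{C_j} \in I_G$. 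Among the tubes appearing, $S \cup T$ strictly contains all others, so $t_{S\cup T}$ is the smallest variable. A characteristic feature of reverse lexicographic order is that between two monomials of equal multidegree, the greater is the one \emph{not} divisible by the smallest variable occurring; thus $t_St_T$ is the leading term and lies in $\init_<(I_G)$.

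For the reverse inclusion, let $J := (t_St_T : S,T \text{ overlapping})$, and call a monomial \emph{standard} if it lies outside $J$, i.e., if its support is pairwise non-overlapping. The plan is to show that the images of the standard monomials are $k$-linearly independent in $k[\mathbf{t}]/I_G$. Since $\init_<(I_G)$ has the same multigraded Hilbert function as $I_G$, linear independence would force the standard monomials to be a basis of $k[\mathbf{t}]/\init_<(I_G)$ as well, which combined with $J \subseteq \init_<(I_G)$ gives equality. Spanning is easy: any occurrence of $t_St_T$ for overlapping $S,T$ can be rewritten via the binomial above into something strictly smaller in revlex, so iteration terminates in a standard monomial.

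Because $I_G$ is toric, linear independence reduces to the assertion that the map $\mathbf u \mapsto \sum_T u_T \ell_T$ is injective when restricted to multi-sets $\mathbf u$ with pairwise non-overlapping support. I expect this injectivity to be the main obstacle. The approach I would take is to reduce to $\mathbf u, \mathbf v$ with disjoint non-overlapping supports (by subtracting the common part) and then induct on $\sum_T(u_T + v_T)$. Let $M$ be the inclusion-maximal tube in $\mathrm{supp}(\mathbf u) \cup \mathrm{supp}(\mathbf v)$, and say $M \in \mathrm{supp}(\mathbf u)$. Non-overlap forces every other participating tube to be either strictly inside $M$ or disjoint from $M$; in particular, no tube in $\mathrm{supp}(\mathbf v)$ contains $M$. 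Inspecting the coefficients of $X'_v$ and $Y'_e$ in the two representations, for strategically chosen vertices and edges of $M$, should allow one to extract $u_M$ and force the contradiction $u_M = 0$. An alternative route that sidesteps this analysis is to verify Buchberger's S-pair criterion for the binomials $\{t_St_T - t_{S\cup T}\prod_j t_{C_j}\}$; the S-pairs arise when two leading terms share a variable, and a case analysis on the mutual positions (nested, disjoint, or overlapping) of the four tubes involved should show each reduces to zero modulo the generators.
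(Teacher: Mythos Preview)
Your treatment of the inclusion $J\subseteq\init_<(I_G)$ and the spanning argument is correct and matches the paper's. The difference lies in the reverse inclusion.

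Your route (a) rests on the injectivity of $\mathbf u\mapsto\sum_T u_T\ell_T$ over multisets with non-overlapping support. This statement is true, but your inductive sketch does not establish it. The sentence ``non-overlap forces every other participating tube to be either strictly inside $M$ or disjoint from $M$'' only holds for tubes in $\mathrm{supp}(\mathbf u)$; a tube $T\in\mathrm{supp}(\mathbf v)$ may perfectly well overlap $M$, since $\mathrm{supp}(\mathbf v)$ is only assumed non-overlapping \emph{internally}. Once such overlaps are allowed, there is no obvious vertex or edge whose coefficient isolates $u_M$, and the plan ``inspect coefficients for strategically chosen $v,e$'' does not go through as stated. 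Route (b) would work but is a substantial case analysis.

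The paper sidesteps the injectivity question entirely by exploiting the monomial order. It uses that the primitive binomials of a toric ideal already form a universal Gr\"obner basis, so it suffices to show that the leading monomial of every primitive binomial $\prod_i t_{S_i}-\prod_j t_{T_j}\in I_G$ is divisible by some $t_St_T$ with $S,T$ overlapping. Choosing $T_0$ so that $t_{T_0}$ is the \emph{smallest} variable on the trailing side, one picks a vertex $v_0\in T_0$ and a maximal $S_0$ among the $S_i$ containing $v_0$. The case $S_0\supsetneq T_0$ is excluded precisely by the revlex property (it would force $\prod_i t_{S_i}$ to be the smaller monomial), $S_0=T_0$ is excluded by primitivity, and in the remaining case an edge of $T_0$ exiting $S_0$ produces a second $S_1$ overlapping $S_0$. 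The key idea you are missing is this use of the order itself---picking the smallest variable on the trailing side---to kill the problematic case, rather than trying to prove an order-free injectivity statement.
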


\begin{proof}
    For each pair of overlapping $S,T\in\Tub(G)$, we can write $S\cap T$ as a disjoint union $\bigcup_i U_i$ of tubes. Then, the equality (recall Remark \ref{rem_overlap})
    $$\ell_S+\ell_T=\ell_{S\cup T}+\textstyle\sum_i\ell_{U_i}$$
    implies that the binomial $t_S t_T-t_{S\cup T}\textstyle\prod_i t_{U_i}$ belongs to $I_G$. We claim these binomials form a Gr\"obner basis\footnote{We note the similarity to the so-called ``Hibi relations" which form a Gr\"obner basis for toric ideals associated to distributive lattices \cite{Hibi}.}
    $$\mathcal G:=\big\{t_S t_T-t_{S\cup T}\textstyle\prod_i t_{U_i}:\text{$S,T\in\Tub(G)$ overlapping}\big\}$$
    for $I_G$ with respect to $<$. This will imply the desired result, as $t_S t_T$ is the leading monomial of $t_S t_T-t_{S\cup T}\textstyle\prod_i t_{U_i}$ (since $t_{S\cup T}$ is smaller than $t_S$ and $t_T$).

    To show that $\mathcal G$ is a Gr\"obner basis of $I_G$, it suffices to show that for any primitive\footnote{
    A binomial $\mathbf t^{\mathbf u}-\mathbf t^{\mathbf v}$ in a toric ideal $I$ is \emph{primitive} if there are no other binomials $\mathbf t^{\mathbf r}-\mathbf t^{\mathbf s}\in I$ with $\mathbf t^{\mathbf r}\mid\mathbf t^{\mathbf u}$ and $\mathbf t^{\mathbf s}\mid\mathbf t^{\mathbf v}$. The ``suffices to show" follows from the fact that the set of all primitive binomials is always a Gr\"obner basis \cite[Proposition 4.11]{Sturmfels}.}
    binomial $\textstyle\prod_i t_{S_i}-\prod_j t_{T_j}$ (with leading monomial $\textstyle\prod_i t_{S_i}$) in $I_G$, the collection $\{S_i\}$ contains a pair of overlapping tubes. Indeed: consider the smallest monomial among $\{t_{T_j}\}$, say $t_{T_0}$, and fix a vertex $v_0\in T_0$. The equality
    $$\textstyle\sum_i\ell_{S_i}=\sum_j\ell_{T_j}$$
    implies that $v_0\in S_i$ for some $i$. Fix a maximal element of $\{S_i:v_0\in S_i\}$, say $S_0$. There are three cases.

    $S_0\supsetneq T_0$: this implies $t_{S_0}<t_{T_0}$, but $t_{T_0}$ was the smallest monomial among $\{t_{T_j}\}$. Hence $\textstyle\prod_i t_{S_i}<\prod_j t_{T_j}$, contradicting $\textstyle\prod_i t_{S_i}$ being the leading monomial.

    $S_0=T_0$: this contradicts $\textstyle\prod_i t_{S_i}-\prod_j t_{T_j}$ being primitive.

    $S_0\not\supseteq T_0$: this implies there is at least one edge $e=\{v,w\}\in T_0$ which leaves $S_0$, \ie with $v\in S_0$, $e\notin S_0$. Again, the equality
    $$\textstyle\sum_i\ell_{S_i}=\sum_j\ell_{T_j}$$
    implies $e\in S_i$ for some $i$, say $S_1$. Now $S_0$ and $S_1$ must be overlapping:
    \begin{enumerate}[label=-]
    
        \item
        the intersection $S_0\cap S_1$ is nonempty since it contains $v$,
        
        \item
        we cannot have $S_1\subseteq S_0$ since $e\notin S_0$,
        
        \item
        we cannot have $S_1\supsetneq S_0$ since $S_0$ was a maximal element of $\{S_i:v_0\in S_i\}$.
        
    \end{enumerate}
    Having found a pair of overlapping tubes in $\{S_i\}$, we conclude that $\mathcal G$ is a Gr\"obner basis of $I_G$.
\end{proof}

\begin{lemma}\label{lemma_triangulation}
    For any graph $G$, $\{\sigma_{\mathbf T}:\mathbf T\in\Com(G)\}$ is a regular unimodular\footnote{A \emph{unimodular} triangulation is one in which every simplex has normalized volume $1$.} triangulation of $P_G^\lor$, where $\sigma_{\mathbf T}$ is the simplex with vertices corresponding to ${\{\ell_T:T\in\mathbf T\}}$.
\end{lemma}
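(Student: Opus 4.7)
The plan is to apply Proposition \ref{prop_triangulationcorrespondence} directly, using the initial ideal already computed in Lemma \ref{lemma_init}. Since
$$\init_<(I_G)=\big(t_S t_T:S,T\in\Tub(G)\text{ overlapping}\big)$$
is a squarefree monomial ideal, its primary decomposition is the intersection of its minimal primes, and each of those minimal primes is of the form $\p_K=(t_T:T\in K)$ where $K\subseteq\Tub(G)$ is a minimal subset meeting every generator, \ie a minimal ``cover" of the set of overlapping pairs.

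The first step is to identify these covers combinatorially. A subset $K$ meets every generator $t_S t_T$ (with $S,T$ overlapping) if and only if its complement $\Tub(G)\setminus K$ contains no overlapping pair, \ie is a collection of pairwise non-overlapping tubes. Minimality of $K$ corresponds to maximality of its complement, so minimal covers are exactly the complements of maximal non-overlapping collections, which are by definition the complete tubings of $G$. Hence the minimal associated primes of $\init_<(I_G)$ are
$$\p_{\mathbf T}=\big(t_T:T\notin\mathbf T\big),\qquad \mathbf T\in\Com(G),$$
and the corresponding simplex given by Proposition \ref{prop_triangulationcorrespondence} has vertices $\{\ell_T:T\in\mathbf{T}\}$, namely $\sigma_{\mathbf T}$.

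It remains to check unimodularity, \ie that the multiplicity of $I_G$ at each $\p_{\mathbf T}$ is $1$. This follows from the fact that $\init_<(I_G)$ is squarefree: its primary decomposition is simply $\bigcap_{\mathbf T\in\Com(G)}\p_{\mathbf T}$, so each primary component coincides with its associated prime and the length of $(k[t_T]/\init_<(I_G))_{\p_{\mathbf T}}$ is $1$. Dimensions match as well: by Corollary \ref{cor_bijection} each $\mathbf T\in\Com(G)$ has exactly $|V|+|E|$ tubes, giving simplices of the correct top dimension.

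I do not anticipate any serious obstacle: the heavy lifting has been done in Lemma \ref{lemma_init}, and the translation from minimal covers to complete tubings is immediate from the definitions. The only point requiring a moment of care is confirming that the multiplicity at each minimal prime of a squarefree monomial ideal is $1$, which is standard but worth stating explicitly to justify the word ``unimodular".
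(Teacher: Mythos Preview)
Your proposal is correct and follows essentially the same route as the paper: apply Proposition \ref{prop_triangulationcorrespondence} after computing $\init_<(I_G)$ in Lemma \ref{lemma_init}, then identify its minimal primes with complete tubings. The paper phrases the identification via edge ideals and maximal independent sets (citing \cite[Lemma 9.1.4]{H-H}), whereas you phrase it dually via minimal vertex covers; these are the same statement, and your explicit justification of multiplicity $1$ from squarefreeness is a welcome detail that the paper leaves implicit.
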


\begin{proof}
    In view of Lemma \ref{lemma_init}, all that is left is to determine the minimal associated primes of
    $$\init_<(I_G)=\big(t_S t_T:\textrm{$S,T\in\Tub(G)$ overlapping}\big).$$
    In fact, this is the edge ideal of the graph $\widetilde G$ which has a vertex for each tube $T\in\Tub(G)$ and an edge for each pair of overlapping tubes $S,T\in\Tub(G)$. For edge ideals, there is a correspondence (\cite[Lemma 9.1.4]{H-H})
    \begin{center}
        \vspace{-0.2in}
        $$\begin{array}{ccc}
                \begin{array}{c} \left\{\begin{array}{c}\text{minimal associated}\\\text{primes of }\init_<(I_G)\end{array}\right\} \end{array} & \longleftrightarrow & \begin{array}{c} \left\{\begin{array}{c}\text{maximal independent}\\\text{sets of }\widetilde G\end{array}\right\} \end{array} \\[3ex]
            (t_T:T\notin\mathbf{T}) & \longleftarrow\!\shortmid & \mathbf{T}.
        \end{array}$$
    \end{center}
    (each prime has multiplicity 1). By maximal independent sets, we mean maximal collections of vertices of $\widetilde G$ with no edges between them; \ie maximal collections of tubes of $G$ with no two tubes overlapping. But this is exactly what we have been calling complete tubings of $G$! Hence the minimal associated primes of $\init_<(I_G)$ are
    $$\p_{\mathbf{T}}=(t_T:T\notin\mathbf{T}),\qquad\mathbf{T}\in\Com(G),$$
    and the result follows from Proposition \ref{prop_triangulationcorrespondence}.
\end{proof}

Triangulation in hand, we can at last obtain the canonical form representation.

\begin{theorem}\label{theorem_canonicalform}
    For any graph $G$,
    $$\frac{\adj_G}{\displaystyle\prod\limits_{T\in\Tub(G)}\ell_T}=\sum_{\mathbf{T}\in\Com(G)}\frac{1}{\prod\limits_{T\in\mathbf{T}}\ell_T}.$$
\end{theorem}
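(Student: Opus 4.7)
The plan is to compute $\adj_G$ directly from Definition \ref{def_adjointpolynomial}, using the triangulation of $P_G^\lor$ supplied by Lemma \ref{lemma_triangulation}, and then observe that unimodularity collapses everything to the desired sum.

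First I would invoke Lemma \ref{lemma_triangulation} to fix the regular unimodular triangulation $\{\sigma_{\mathbf T}\}_{\mathbf T \in \Com(G)}$ of $P_G^\lor$, in which $\sigma_{\mathbf T}$ is the simplex whose vertices correspond to $\{\ell_T : T \in \mathbf T\}$. Warren's independence-of-triangulation result (cited in the paragraph after Definition \ref{def_adjointpolynomial}) tells us we are free to use this triangulation in Definition \ref{def_adjointpolynomial}. Since each $\sigma_{\mathbf T}$ has normalized volume $1$, the definition specializes to
$$\adj_G = \sum_{\mathbf T \in \Com(G)} \prod_{\substack{T \in \Tub(G)\\ T \notin \mathbf T}} \ell_T.$$

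Next I would divide through by $\prod_{T \in \Tub(G)} \ell_T$. For each $\mathbf T \in \Com(G)$, the factors $\ell_T$ with $T \notin \mathbf T$ cancel between numerator and denominator, leaving exactly $1/\prod_{T \in \mathbf T} \ell_T$. Summing over $\mathbf T \in \Com(G)$ yields the right-hand side of the theorem, completing the proof.

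The main obstacle here is combinatorial rather than geometric, and it has already been cleared by Lemmas \ref{lemma_init} and \ref{lemma_triangulation}: identifying a Gr\"obner basis for $I_G$ under a reverse-lex order respecting tube inclusion, matching the minimal associated primes of the initial ideal with complete tubings, and verifying that every minimal prime has multiplicity one (unimodularity). Without unimodularity the sum would carry nontrivial volume weights and the cancellation would fail; with it, the theorem reduces to a one-line substitution into Warren's formula.
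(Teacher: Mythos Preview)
Your proposal is correct and matches the paper's own proof essentially line for line: apply Definition \ref{def_adjointpolynomial} with the unimodular triangulation of Lemma \ref{lemma_triangulation}, so that $\vol(\sigma_{\mathbf T})=1$ and $\adj_G=\sum_{\mathbf T\in\Com(G)}\prod_{T\notin\mathbf T}\ell_T$, then divide by $\prod_{T\in\Tub(G)}\ell_T$. The only difference is expository, in that you spell out the role of Warren's independence result and the cancellation more explicitly than the paper does.
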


\begin{proof}
    Applying Definition \ref{def_adjointpolynomial} with the triangulation in Lemma \ref{lemma_triangulation}, we have
    $$\adj_G=\sum_{\mathbf T\in\Com(G)}\vol(\sigma_{\mathbf T})\prod_{\ell_T\notin\sigma_{\mathbf T}}\ell_T=\sum_{\mathbf{T}\in\Com(G)}\prod_{T\notin\mathbf{T}}\ell_T.$$
    Dividing both sides by $\prod_{T\in\Tub(G)}\ell_T$ gives the desired result.
\end{proof}

\begin{example}
    Let $G$ be the $1$-loop bubble from Examples \ref{example_bulk} and \ref{example_boundary}. The complete tubings of $G$ are $\{\{1\},\{2\},\{A\},\{AB\}\}$ and $\{\{1\},\{2\},\{B\},\{AB\}\}$, so the triangulation of $P_G^\lor$ is the one shown in Figure \ref{fig_polytopeexample}, its adjoint polynomial is
    $$\adj_G=\ell_A+\ell_B,$$
    and the canonical form representation of $\psi_G$ is
    $$\psi_G=\frac{\ell_A+\ell_B}{\ell_1\ell_2\ell_A\ell_B\ell_{AB}}.$$
\end{example}

\begin{figure}
    \begin{tabular}[t]{ccc}
        \resizebox{0.3\textwidth}{!}{
        \begin{tikzpicture}
            \node (A) at (0,4) {};
            \node (B) at (0,-4) {};
            \node (1) at (-3,0.5) {};
            \node (2) at (3,0.5) {};
            \node (AB) at (0.25,-0.5) {};
            \draw[dashed] (1) -- (2);
            \draw (1) -- (A);
            \draw (1) -- (B);
            \draw (1) -- (AB);
            \draw (2) -- (A);
            \draw (2) -- (B);
            \draw (2) -- (AB);
            \draw (A) -- (AB);
            \draw (B) -- (AB);
            \fill[draw, color=black, fill=blue, opacity=0.2] (1.center) -- (A.center) -- (AB.center);
            \fill[draw, color=black, fill=blue, opacity=0.2] (2.center) -- (A.center) -- (AB.center);
            \fill[draw, color=black, fill=red, opacity=0.2] (1.center) -- (B.center) -- (AB.center);
            \fill[draw, color=black, fill=red, opacity=0.2] (2.center) -- (B.center) -- (AB.center);
            \node[color=blue] at (A) [circle,draw,fill=white,inner sep=2pt] {$A$};
            \node[color=red] at (B) [circle,draw,fill=white,inner sep=2pt] {$B$};
            \node[color=gray] at (1) [circle,draw,fill=white,inner sep=2pt] {$1$};
            \node[color=gray] at (2) [circle,draw,fill=white,inner sep=2pt] {$2$};
            \node[color=gray] at (AB) [circle,draw,fill=white,inner sep=2pt] {$AB$};
        \end{tikzpicture}
        }
        & &
        \begin{tabular}{c}
            \\[-2in]
            \begin{tikzpicture}
                \draw (0,0) circle (0.375);
                \draw[fill=black] (-0.375,0) circle (.5mm);
                \draw[fill=black] (0.375,0) circle (.5mm);
                \draw[color=gray] (-0.375,0) circle (.15cm and .15cm);
                \draw[color=gray] (0.375,0) circle (.15cm and .15cm);
                \draw[color=gray] (0,0) circle (0.65);
                \draw[color=blue] (0:0.175) arc (0:180:0.175);
                \draw[color=blue] (0:0.575) arc (0:180:0.575);
                \draw[color=blue] (180:0.575) arc (-180:0:0.2);
                \draw[color=blue] (0:0.575) arc (0:-180:0.2);
            \end{tikzpicture}
            \\[0.2in]
            \begin{tikzpicture}
                \draw (0,0) circle (0.375);
                \draw[fill=black] (-0.375,0) circle (.5mm);
                \draw[fill=black] (0.375,0) circle (.5mm);
                \draw[color=gray] (-0.375,0) circle (.15cm and .15cm);
                \draw[color=gray] (0.375,0) circle (.15cm and .15cm);
                \draw[color=gray] (0,0) circle (0.65);
                \draw[color=red] (0:0.175) arc (0:-180:0.175);
                \draw[color=red] (0:0.575) arc (0:-180:0.575);
                \draw[color=red] (180:0.575) arc (180:0:0.2);
                \draw[color=red] (0:0.575) arc (0:180:0.2);
            \end{tikzpicture}
        \end{tabular}
    \end{tabular}
\caption{Illustration of the triangulation of the dual cosmological polytope for the $1$-loop bubble. Each vertex of the polytope corresponds to a tube, and each simplex in the triangulation corresponds to a complete tubing.}
\label{fig_polytopeexample}
\end{figure}
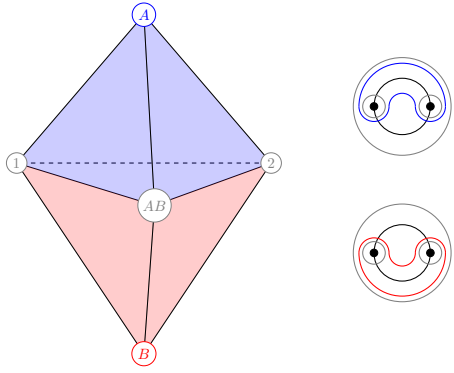

\section*{Acknowledgements}

I would like to thank Anna-Laura Sattelberger, Daniel Bath, and Uli Walther for many helpful conversations, as well as Claudia Fevola for providing useful input.

\subsection*{Data Availability} No data was generated or analyzed in this work.

\subsection*{Declarations} This work was supported in part by NSF grant DMS-2100288 and by Simons Foundation Collaboration Grant for Mathematicians \#580839 and SFI-MPS-TSM-00012928. The author has no competing interests to declare.

\bibliographystyle{plain}
\bibliography{references}

\end{document}